\declaretheoremstyle[%
  spaceabove=-6pt,%
  spacebelow=6pt,%
  headfont=\normalfont\itshape,%
  postheadspace=1em,%
  qed=\qedsymbol%
]{mystyle} 
\newtheorem{thm}{Theorem}[section]
\newtheorem{prop}{Proposition}[section]
\newtheorem{cor}[thm]{Corollary}
\newtheorem{defn}[thm]{Definition}
\newtheorem{lemma}[thm]{Lemma}
\newtheorem{remark}[prop]{Remark}
\newtheorem{example}[thm]{Example}
\newcommand{\FF}{{\rm I\!F}}
\newcommand{\CC}{\mathcal{C}}
\newcommand{\CD}{\mathcal{D}}
\newcommand{\CE}{\mathcal{E}}
\newcommand{\CL}{\mathcal{L}}
\newcommand{\CO}{\mathcal{O}}
\newcommand{\CQ}{\mathcal{Q}}
\newcommand{\CX}{\mathcal{X}}
\newcommand{\ba}{\mathbf{a}}
\newcommand{\bx}{\mathbf{x}}
\newcommand{\by}{\mathbf{y}}
\newcommand{\lmd}{\mathrm{lmd}}
\newcommand{\spt}{\mathtt{supp}}
\title{On Linear Complementary Pairs of Algebraic Geometry Codes over Finite Fields}
\author{Sanjit Bhowmick$^{1}$, Deepak Kumar Dalai$^{1}$, and Sihem Mesnager$^{2}$\footnote{
$^{1}$School of Mathematical Sciences,
National Institute of Science Education and Research,\\
An OCC of Homi Bhabha National Institute, Bhubaneswar, Odisha 752050, India. Email: sanjitbhowmick@niser.ac.in; deepak@niser.ac.in\\
$^{2}$Department of Mathematics, University of Paris VIII, F-93526 Saint-Denis, Laboratory Analysis, Geometry and Applications, LAGA, University Sorbonne Paris Nord, CNRS, UMR 7539, F-93430, Villetaneuse, France and Telecom Paris, Polytechnic institute of Paris, 91120 Palaiseau, France. Email: smesnager@univ-paris8.fr 
}
}
\begin{document}
\maketitle
\begin{abstract} 

Linear complementary dual (LCD) codes and linear complementary pairs (LCP) of codes have been proposed for new applications as countermeasures against side-channel attacks (SCA) and fault injection attacks (FIA) in the context of direct sum masking (DSM). The countermeasure against FIA may lead to a vulnerability for SCA when the whole algorithm needs to be masked (in environments like smart cards). This led to a variant of the LCD and LCP problems, where several results have been obtained intensively for LCD  codes, but only partial results have been derived for LCP codes. Given the gap between the thin results and their particular importance, this paper aims to reduce this by further studying the LCP of codes in special code families and,  precisely, the characterisation and construction mechanism of LCP codes of algebraic geometry codes over finite fields. Notably, we propose constructing explicit LCP of codes from elliptic curves. Besides, we also study the security parameters of the derived LCP of codes $(\CC, \CD)$ (notably for cyclic codes), which are given by the minimum distances $d(\CC)$ and $d(\CD^\perp)$.
Further, we show that for  LCP  algebraic geometry codes $(\CC,\CD)$, the dual code $\CC^\perp$ is equivalent to $\CD$ under some specific conditions we exhibit. Finally, we investigate whether MDS LCP of algebraic geometry codes exist (MDS codes are among the most important in coding theory due to their theoretical significance and practical interests). Construction schemes for obtaining LCD codes from any algebraic curve were given in 2018 by Mesnager, Tang and Qi in [``Complementary dual algebraic geometry codes", IEEE Trans. Inform Theory, vol. 64(4),  2390--3297, 2018]. To our knowledge, it is the first time LCP of algebraic geometry codes has been studied.

\end{abstract}

\noindent\textbf{Keywords:}
 Finite Field,  Linear complementary pairs (LCP) of codes, Algebraic geometry code, Algebraic curve, Elliptic curves\\
 \noindent\textbf{2020 AMS Classification Code:} 51E22; 94B05. 




\section{Introduction}\label{sec:intr}

Let $\CC$ be a linear code of length $n$,  dimension $k$  and minimum Hamming distance $d$ over a  finite field $\mathbb F_q$, where $q = p^m$ and $p$ is a prime. The code $\CC$ is called an $[n, k, d]$ linear code over $\mathbb F_q$. The one such that 
$n = k + d -1$ is a \emph{maximum distance separable} (MDS) code.  Moreover, given a linear code $\CC$ over $\mathbb F_q$, its (Euclidean) dual code is denoted by $\CC^ {\perp}$. 

Linear complementary pairs (LCP) of codes are extensively explored because of their unique algebraic structure and wide application in cryptography. This concept was first introduced by Bhasin et al. in~\cite{NBD15}. LCP of codes over a finite field is further studied in~\cite{BCC14} and~\cite{CG18}.  A pair of linear codes $(\CC,\CD)$ over $\FF_q$ of length $n$ is called LCP if $\CC \oplus \CD = \FF_q^n$, where $\oplus$ represents the direct sum of two subspaces.  The concept of LCP of codes is related to the notion of Linear complementary dual (LCD) codes. Indeed, when $\CD = \CC^\perp$, $\CC$ is an LCD code. The notion of an LCD code was first introduced by James L. Massey in 1992~\cite{Mas92}, long before their recent cryptographic applications. These codes provided an optimum linear coding solution for the two-user binary adder channel. Massey gave a characterization and some constructions of codes with complementary duals. He also showed that LCD codes are asymptotically good. 
In 1994, X. Yang and  L. Massey characterized cyclic LCD codes (see~\cite{YM94}).  Further, in 2018, Mesnager, Tang, and Qi.~\cite{MTQ18} studied algebraic geometry LCD codes over finite fields. In the same paper, they obtained good examples from projective lines, elliptic, and Hermitian curves. The same year, Carlet et al. (~\cite{CMT18}) proved that every nonbinary linear code is equivalent to an LCD code with the same parameters. More precisely, they showed that when $q > 3,$ any linear code over $\FF_q$ is equivalent to an Euclidean LCD code, so when $q > 3,$ $q$-ary Euclidean LCD codes are as good as $q$-ary linear codes. On the other hand, in~\cite{CG18}, Carlet et al. proposed a structure of LCP codes over finite fields and constructed a good example of LCP codes, which is more useful than LCD codes over finite fields. After that, Carlet et al. \cite{CMTQ19} first introduced the notion of $\sigma$-LCD codes over finite fields, which is a generalization of Euclidean and Hermitian LCD codes (See \cite{CMTQ18}). In the same paper, they showed that $\sigma$-LCD codes allow the construction of LCP codes. However, this topic has been popular for its valuable application in the context of masking schemes and robustness against side-channel and fault injection attacks, which are shown in ~\cite{BCC14,CG16,NBD15}. Carlet et al.~\cite{CG18} showed that if the pair $(\CC,\CD)$ is LCP, where $\CC$ and $\CD$ are cyclic codes over a finite field, then $\CC$ and $\CD^\perp$ are equivalent. They further showed that if the length of the codes is relatively prime to the characteristic of the finite field and $\CC$ and $\CD$ both are $2\CD$ cyclic codes, then $\CC$ and $\CD^\perp$ are equivalent. Later, Guneri et al.~\cite{GOS18} extended the same results for linear codes $\CC$ and $\CD$, which are $m\CD$ cyclic codes for some $m \in \mathbb{N}$. In this context, the security parameter for LCP of codes $(\CC,\CD)$ is defined to be the minimum of the minimum distances of $\CC$ and $\CD^\perp$, i.e., it is $\min\{d(\CC), d(\CD^\perp)\}$. This parameter is $d(\CC)$ for the LCD case as $\CD^\perp = \CC$. The aim is to construct LCP codes with significant security parameters to strengthen the system's security.

Inspired by~\cite{MTQ18}, we study the LCP of algebraic geometry codes in this article.  However, it is challenging to obtain explicit constructions as requested by many applications, and only a few known LCP codes have been explicitly constructed. The main objective of this paper is to explicitly construct some classes of  LCP codes from algebraic curves.

This paper is organized as follows. Section~\ref{sec:pre} recalls the basic material of linear codes over a finite field and introduces the necessary background about algebraic geometry codes. Then, we define a complementary pair of algebraic geometry codes over an arbitrary finite field. In Section~\ref{sec:char}, we elaborate on the property of a complementary pair of algebraic geometry codes and obtain some characterization for a complementary pair of codes over a finite field. Further, we present a pair of algebraic geometry codes $(\CC,\CD)$ such that $\CC^\perp$ is equivalent to $\CD$ under specific conditions. We derive a complementary pair of algebraic geometry codes from elliptic curves in Section~\ref{sec:elpt}.
Furthermore, we obtain a complementary pair of algebraic geometry codes from arbitrary algebraic geometry codes in Section~\ref{sec:agc}. Finally, we present a pair of algebraic geometry codes $(\CC,\CD)$ such that $\CC^\perp$ is equivalent to $\CD$.  Given two algebraic geometry codes (with some assumptions on the involved divisors), we shall provide simple ways to give rise to a pair of LCP codes from the initial codes. 

\section{Some preliminaries}\label{sec:pre} 

In this section, we briefly introduce linear codes and algebraic geometry codes over finite fields. Codes with excellent properties have been obtained using techniques and resources from algebraic curves, the so-called algebraic geometry codes.

Throughout the paper, let $\FF_q$ be the finite field with cardinality $q = p^m$ for a prime $p$, and $\FF_q^*$ be the multiplicative group of order $q-1.$ 

\subsection{Background on linear codes}
An $[n, k, d]$ linear code $\mathcal C$ over $\mathbb F_q$  of length $n$  is a linear subspace of  $\FF^n_q$ with dimension $k$ and minimum (Hamming) distance $d$. The minimum distance of an $[n, k, d]$ linear code is bounded by the Singleton bound
\[d\le n+1-k.\] A  code meeting the above bound is called \emph{Maximum Distance Separable} (MDS).
  
 For any two vectors $\bx = (x_1, x_2, \ldots, x_n)$ and $\by = (y_1, y_2, \ldots, y_n)$ in $\FF_q^n$, the inner-product between $\bx$ and $\by$ are defined by $\langle \bx,\by \rangle=\sum\limits_{i=1}^nx_iy_i$.   For a linear code $\CC$ over $\FF_q$ of length $n$, the code $\CC^\perp = \{\bx \in \FF^n_q~|~\langle \bx,\mathbf{c} \rangle = 0 \text{ for all } \mathbf{c} \in \CC\}$ is said to be the dual of $\CC$. For a vector $\ba = (a_1, a_2, \ldots, a_n) \in \FF_q^n$ and a linear code $\CC$, $\ba \CC$ is defined as  $\ba \CC=\{(a_1c_1, a_2c_2, \ldots, a_nc_n)~|~(c_1, c_2, \ldots, c_n)\in \CC\}$.  Note that $\ba\CC$ is linear if $\CC$ is linear. The Euclidean hull of a linear code $\mathcal C$ is defined to be $\text{Hull}_{E}(\mathcal C):= \mathcal C \cap \mathcal C^ {\perp}$
 and $h_{E}(\mathcal C)$ be the dimension of $\text{Hull}_{E}(\mathcal C)$. A linear code $\mathcal C$ over $\mathbb F_{q}$ is called an \emph{LCD code} (or for short, LCD code) if $h_{E}(\mathcal C)=0$.  A pair of linear codes $(\CC,\CD)$ over $\FF_q$ of length $n$ is called LCP if $\CC \oplus \CD = \FF_q^n$, where $\oplus$ represents the direct sum of two subspaces. When $\CD = \CC^\perp$ then, $\CC$ is an LCD code.
 
 For any $\mathbf a=(a_1,\ldots, a_n) \in \mathbb F_q^{n}$ and permutation $\sigma$ of $\{1,2,\ldots, n\}$, we define $\mathcal C_{\mathbf a}$ and $\sigma(\mathcal C)$ as the following linear codes
\begin{align*}
\mathcal C_{\mathbf a}=\{(a_1c_1,\ldots, a_n c_n): (c_1, \ldots, c_n) \in \mathcal C\},\\
\sigma(\mathcal C)=\{(c_{\sigma(1)},\ldots, c_{\sigma(n)}): (c_1, \ldots, c_n) \in \mathcal C\}.
\end{align*}

Two codes $\mathcal C$ and $\mathcal C'$ in $\mathbb F_q^n$ are called \emph{equivalent} if $\mathcal C'=\sigma(\mathcal C_{\mathbf a})$ for some permutation $\sigma$ of $\{1,2,\ldots, n\}$ and $\mathbf a \in \mathbb (\mathbb F_q^*)^n$.  Any $[n,k]$ linear code over a finite field is equivalent to a code generated by a matrix of the form $[I_k: P]$ where $I_k$ denotes the $k\times k$ identity matrix. Notably,  if $\ba = (a_1, a_2, \ldots, a_n) \in \left(\FF_q^*\right)^n$, then $\CC$ is equivalent to $\ba\CC$ and $(\ba\CC)^\perp = \ba^{-1}\CC^\perp$, where $\ba^{-1}=(a_1^{-1},a_2^{-1},\ldots,a_n^{-1})$.

\subsection{Background on algebraic geometry codes}

The following presents some basic definitions and results of algebraic geometry codes needed to derive our main results. We refer to~\cite{Sti08} for more detailed information on algebraic geometry codes.  These codes are important in the context of LCD codes. Notably, Jin and Xing \cite{JX-LCD} showed that an algebraic geometry code over $\mathbb{F}_2^m$ ($m \ge  7$) is equivalent to an LCD code. Consequently, they proved that algebraic geometry codes with complementary duals exceed the well-known Gilbert-Varshamov asymptotic bound.

Let $\CX$ be a smooth projective curve of genus $g$ over $\FF_q$. The function field corresponding to $\CX$ is denoted by $\FF_q(\CX)$. The maximal ideal in a valuation ring $O$ of $\FF_q(\CX)$ is called place $P$ of $\FF_q(\CX)$. Note that $O/P$ is isomorphic to an extended field over $\FF_q$. The degree of a place $P$ is defined by $\deg(P) = [O/P:\FF_q]$. Note that a place is {\em rational} if the degree of the place is one.
We denote $\mathbb{P}$ as the set of all places of $\FF_q(\CX)$. A {\em divisor} $D$ is defined by $$D = \sum\limits_{P \in \mathbb{P}} m_P P, \text{ all } m_P = 0 \text{ except finitely many}.$$
The {\em support} of a divisor $D$ is defined as $\spt(D) = \{P \in \mathbb{P}~|~m_P \neq 0\}$ and the {\em degree} of $D$ is defined as $\deg(D) = \sum\limits_{P \in \spt(D)}m_P\deg(P)$. If the places are rational (i.e., their degrees are one), $\deg(D) = \sum\limits_{P \in \spt(D)}m_P$.
Furthermore, $D = \sum\limits_{P\in\mathbb{P}}m_PP \leq  D' = \sum\limits_{P\in\mathbb{P}}n_pP$ if and only if $m_P \leq n_P$, for all $P\in\mathbb{P}$. Two divisors $D$ and $D'$ are {\em equivalent} if there exists $f$ in $\FF_q(\CX)$ such that $(f) = D - D'$ and is denoted as $D \sim D'$.

For each place $P$, the corresponding discrete valuation is denoted by $v_P$. 
The {\em principal divisor} of a non-zero function $f$ is defined by $(f) = \sum\limits_{P\in \mathbb{P}}v_P(f)P$. 

The Riemann-Roch space associated to a divisor $G$ is 
$$\CL(G)=\{f\in\FF_q(\CX)\setminus\{0\}~:~(f)\geq -G \}\cup\{0\}.$$ 
Note that $\CL(G)$ is a finite-dimensional vector space over $\FF_q$, and its dimension is denoted by $\ell(G)$.\\
Let $P_1, P_2, \ldots, P_n$ be rational points on $\CX$ and $G, H$ and $D$ be three divisors such that $D = P_1 + P_2 + \cdots + P_n$ and $\spt(G) \cap \spt(D) = \spt(H)\cap \spt(D) = \emptyset$. Then, the image of the mapping 
$$\varphi:~\CL(G)\rightarrow \FF_q^n,~f\mapsto(f(P_1),f(P_2),\dots,f(P_n)),$$
is a linear code over $\FF_q$. This code is called an algebraic geometry code, and it is defined by 
$$C_{\CL}(D,G)=\{(f(P_1),f(P_2),\dots,f(P_n))~|~f\in\CL(G)\}.$$
If $2g-2 < \deg(G) < n$, then $\dim(C_{\CL}(D,G)) = \ell(G) = \deg(G) - g + 1$.

Let $\Omega$ be the module of Weil differentials of $\FF_q(\CX)$ and is defined by $$\Omega:=\{fdx~:~f\in \FF_q(\CX)\}.$$
$\Omega$ is a one-dimensional vector space over $\FF_q$.
For a place $P$ and a function $t$ with the condition $v_P(t)=1$, we define $v_P(fdt) = v_P(f)$. The canonical divisor is defined as 
$$(\omega) = \sum\limits_{P \in \mathbb{P}} v_P(\omega)P \text{ all } v_P = 0 \text{ except finitely many},$$
where $\omega \in \Omega$. We define a set by 
$$\Omega(G):=\{\omega\in\Omega : (\omega) \geq G\} \cup\{0\}.$$
Note that $\Omega(G)$ is a finite-dimensional vector space over $\FF_q$, and its dimension is $i(G)$. $G$ is called a {\em non-special divisor} if $i(G) = 0$. Furthermore, if $G$ and $H$ are two divisors such that $G$ is a non-special and $G \leq H$, then $H$ is also non-special.

Below is a celebrated famous result~\cite{Sti08}, known as the Riemann-Roch theorem, a central influence in algebraic geometry with applications in other areas and the key to several developments in coding theory.

\begin{thm}\label{Riemann-Roch}
Let $G$ be a divisor on a smooth projective curve $\CX$ of genus $g$ over $\mathbb F_q$. Then, for any canonical divisor $K$
\begin{align*}
\ell(G) -i(G)=deg(G)+1-g \text{~~~~and~~~~} i(G)=l(K-G).
\end{align*}
\end{thm}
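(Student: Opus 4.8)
The statement is the Riemann--Roch theorem, which in the paper is quoted from~\cite{Sti08}; if one wants a self-contained argument, the plan is to follow the classical adelic (repartition) approach, handling the two displayed identities as a single package. First I would fix the genus intrinsically by setting $g := \max\{\deg(A) - \ell(A) + 1\}$ over all divisors $A$ on $\CX$, and verify that this supremum is finite and attained; this is Riemann's inequality $\ell(A) \ge \deg(A) + 1 - g$, with equality whenever $\deg(A)$ is large enough. With $g$ in hand, define the \emph{index of specialty} $i(A) := \ell(A) - \deg(A) - 1 + g \ge 0$. Then the first identity $\ell(G) - i(G) = \deg(G) + 1 - g$ holds by definition, and the whole problem reduces to proving $i(A) = \ell(K - A)$ for a canonical divisor $K$, i.e.\ $\dim\Omega(A) = i(A)$.

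To prove this, I would introduce the adele space $\mathcal{A}$ of $\FF_q(\CX)$ (families $(\alpha_P)_P$ of elements of the completions at the places, integral at all but finitely many $P$), with $\FF_q(\CX)$ embedded diagonally and, for each divisor $A$, the subspace $\mathcal{A}(A) = \{\alpha : v_P(\alpha_P) \ge -v_P(A) \text{ for all } P\}$. The key finiteness fact is that $\mathcal{A}/(\mathcal{A}(A) + \FF_q(\CX))$ is finite-dimensional of dimension exactly $i(A)$; this comes out of the elementary computations $\dim \mathcal{A}(A_2)/\mathcal{A}(A_1) = \deg A_2 - \deg A_1$ for $A_1 \le A_2$, $\mathcal{A}(A) \cap \FF_q(\CX) = \CL(A)$, and the observation that $\mathcal{A}(A) + \FF_q(\CX) = \mathcal{A}$ once $\deg A$ is large, followed by bookkeeping with $\ell$ and $\deg$. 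A Weil differential is then an $\FF_q$-linear functional on $\mathcal{A}$ vanishing on some $\mathcal{A}(A) + \FF_q(\CX)$; for $\omega \ne 0$ one shows there is a largest divisor $W =: (\omega)$ with $\omega$ vanishing on $\mathcal{A}(W) + \FF_q(\CX)$, and that $\Omega(A) = \{\omega : (\omega) \ge A\}$ is canonically the dual space of $\mathcal{A}/(\mathcal{A}(A) + \FF_q(\CX))$; hence $\dim\Omega(A) = i(A)$.

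The final step is to recognise $\ell(K - A)$ on the right. Equip $\Omega$ with the $\FF_q(\CX)$-action $(x\omega)(\alpha) := \omega(x\alpha)$ and prove that $\Omega$ is a one-dimensional vector space over $\FF_q(\CX)$. Fixing $\omega_0 \ne 0$ with $(\omega_0) = W$, the map $x \mapsto x\omega_0$ is $\FF_q$-linear, and $(x\omega_0) = (x) + W \ge A$ is equivalent to $x \in \CL(W - A)$, so it restricts to an isomorphism $\CL(W - A) \xrightarrow{\ \sim\ } \Omega(A)$ (surjectivity uses the one-dimensionality over $\FF_q(\CX)$). Therefore $i(A) = \dim\Omega(A) = \ell(W - A)$, and since any two canonical divisors are linearly equivalent we may take $K = W$; specializing to $A = 0$ yields $\ell(K) = g$ and to $A = K$ yields $\deg(K) = 2g - 2$, which can be recorded as corollaries.

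I expect the main obstacle to be the duality step: establishing that $\mathcal{A}/(\mathcal{A}(A) + \FF_q(\CX))$ has dimension $i(A)$ and that its dual is realised exactly by $\Omega(A)$, together with the fact that $\Omega$ is one-dimensional over the function field. Everything else is either a definition (the first identity, $i(A)$), Riemann's inequality (a boundedness argument), or linear-algebra bookkeeping with $\ell$ and $\deg$. Given that this is a textbook theorem, in the write-up I would simply cite~\cite{Sti08} unless a proof is explicitly wanted.
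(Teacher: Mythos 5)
The paper offers no proof of this theorem---it is quoted as standard background with a citation to \cite{Sti08}---and your outline is precisely the adelic/repartition proof given there, so the approaches coincide. Your sketch is correct at the level of detail presented: defining $g$ via Riemann's inequality, identifying $i(A)$ with $\dim\mathcal{A}/\bigl(\mathcal{A}(A)+\FF_q(\CX)\bigr)$, realising $\Omega(A)$ as the dual of that quotient, and using the one-dimensionality of $\Omega$ over $\FF_q(\CX)$ to get $i(A)=\ell(K-A)$ are exactly the steps of the proof in the cited reference, and in the write-up a citation suffices.
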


If $\deg(G) > 2g-2$, i.e., $G$ is non-special, then $i(G) = 0$~\cite{Sti08}. In that case,
\begin{equation}\label{eqn:ig}
\ell(G) = \deg(G) - g + 1.
\end{equation}
Finally, we define another set $C_{\Omega}(D,G)$ as
$$C_{\Omega}(D,G) = \{\textit{res}_{P_1}(\omega), \textit{res}_{P_2}(\omega), \ldots, \textit{res}_{P_n}(\omega)~|~\omega\in\Omega(G-D)\}.$$
Note that the dual of $C_{\CL}(D,G)$ is $C_{\Omega}(D,G)$~\cite{Sti08}.

An algebraic geometry code $C_{\CL}(D,G)$ associating with divisors $G$ and $D$ over the projective line is said to be \emph{rational}. In particular, BCH and Goppa codes can be described using rational AG codes. All the well-known generalized Reed-Solomon codes and extended generalized Reed-Solomon codes can be defined under the framework of algebraic geometry codes.

\section{Characterization of LCP of codes over from algebraic curves and its consequences}\label{sec:char}

Let $\CX$ be a smooth projective curve of genus $g$ over $\FF_q$. Throughout the paper, we consider $P_1, P_2, \ldots, P_n$ are rational points on $\CX$ and $G, H$ and $D$ are divisors such that $D=P_1 + P_2 + \cdots + P_n$ and $\spt(G)\cap \spt(D) = \spt(H) \cap \spt(D) = \emptyset$. According to the definition,  a pair $(C_{\CL}(D,G), C_{\CL}(D,H))$ of algebraic geometry codes is a linear complementary pair (LCP) of codes over $\FF_q$ if $C_{\CL}(D,G)\oplus C_{\CL}(D,H)=\FF_q^n$. In other words, a pair $(C_{\CL}(D,G), C_{\CL}(D,H))$ of algebraic geometry codes is LCP if and only if $C_{\CL}(D,G) \cap C_{\CL}(D,H) = \{0\}$ and $\dim_{\FF_q}(C_{\CL}(D,G)) + \dim_{\FF_q}(C_{\CL}(D,H)) = n$. In addition, if $2g-2 < \deg(G) < n$ and $2g-2 < \deg(H) <n$, then we have
\begin{equation}\label{eq-3.1}
\ell(G)+\ell(H)=n.
\end{equation}
For the divisors, $G = \sum\limits_{i=1}^n m_iQ_i$ and $H = \sum\limits_{i=1}^nm_i'Q_i$, we define 
\begin{equation}\label{eq-3.2}
\gcd(G,H) = \sum_{i=1}^n \min(m_i,m_i')Q_i,~~~\lmd(G,H)=\sum_{i=1}^n \max(m_i,m_i')Q_i.
\end{equation}     
As the algebraic geometry code is defined, we have 
$$C_{\CL}(D,\gcd(G,H)) = \{(f(P_1),f(P_2),\dots,f(P_n))|~f\in\CL(\gcd(G,H))\} \text{~~~~and }$$ 
$$C_{\CL}(D,\lmd(G,H)) = \{(f(P_1),f(P_2),\dots,f(P_n))|~f\in\CL(\lmd(G,H))\}.~~~~~~~~~$$
\begin{lemma}
Let $G = \sum\limits_{i=1}^n m_iQ_i, H = \sum\limits_{i=1}^nm_i'Q_i$ and $D = P_1 + P_2 + \cdots + P_n$ be three divisors over a smooth projective curve $\CX$ of genus $g$ over $\FF_q$. If $\spt(G) \cap \spt(D) = \spt(H) \cap \spt(D) = \emptyset$. Then $\ell(\lmd(G,H)-D) \leq \ell(\lmd(G,H))$ and $|\ell(\lmd(G,H)-D)-\ell(\lmd(G,H))|\leq n$.
\end{lemma}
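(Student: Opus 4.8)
The plan is to reduce everything to the single divisor $M := \lmd(G,H)$ together with the observation that $D = P_1+\cdots+P_n$ is effective of degree $n$. First I would note that $\spt(M)\subseteq\spt(G)\cup\spt(H)$, so the hypothesis $\spt(G)\cap\spt(D)=\spt(H)\cap\spt(D)=\emptyset$ gives $\spt(M)\cap\spt(D)=\emptyset$ (this keeps $M$ and $M-D$ in the setting where the algebraic geometry code machinery applies, though for the two $\ell$-inequalities alone it is not strictly needed). Since $D\geq 0$ we have $M-D\leq M$, hence $-(M-D)\geq -M$, and therefore $\CL(M-D)\subseteq\CL(M)$: any $f$ with $(f)\geq -(M-D)$ automatically satisfies $(f)\geq -M$. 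Passing to dimensions yields $\ell(\lmd(G,H)-D)\leq\ell(\lmd(G,H))$, which is the first claim; in particular $\ell(M)-\ell(M-D)\geq 0$, so $|\ell(M-D)-\ell(M)| = \ell(M)-\ell(M-D)$.

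For the bound $|\ell(\lmd(G,H)-D)-\ell(\lmd(G,H))|\leq n$, I would apply the Riemann--Roch theorem (Theorem~\ref{Riemann-Roch}) to both $M$ and $M-D$ and subtract the two identities:
\[
\bigl(\ell(M)-i(M)\bigr)-\bigl(\ell(M-D)-i(M-D)\bigr)=\deg(M)-\deg(M-D)=n,
\]
so $\ell(M)-\ell(M-D)=n+i(M)-i(M-D)$. It then suffices to observe that $i(M)\leq i(M-D)$: a Weil differential $\omega$ with $(\omega)\geq M$ also satisfies $(\omega)\geq M-D$, hence $\Omega(M)\subseteq\Omega(M-D)$. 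Combining with the first part gives $0\leq\ell(M)-\ell(M-D)\leq n$, as desired. Alternatively, one can avoid Riemann--Roch altogether: writing $D$ as a sum of $n$ rational points and adjoining them one at a time to $M-D$, the elementary step inequality $\ell(A)\leq\ell(A+P)\leq\ell(A)+1$ telescopes to $\ell(M-D)\leq\ell(M)\leq\ell(M-D)+n$.

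There is no real obstacle in this lemma; the only points requiring care are getting the direction of $M-D\leq M$ right (with the sign reversal $-(M-D)\geq-M$ in the definition of $\CL$) and the monotonicity $i(M)\leq i(M-D)$, both of which are immediate from the definitions. In essence the statement is just a packaging of the standard facts that enlarging a divisor can only increase $\ell$, and by at most the degree of the difference.
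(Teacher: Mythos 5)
Your proof is correct. For the first inequality both you and the paper rely on the same containment $\CL(\lmd(G,H)-D)\subseteq\CL(\lmd(G,H))$, which follows from $D\geq 0$; your write-up of the sign reversal $-(M-D)\geq -M$ is actually cleaner than the paper's terse valuation table. For the bound by $n$, however, you take a genuinely different route: you apply Riemann--Roch to $M$ and $M-D$, subtract, and use the monotonicity $i(M)\leq i(M-D)$ coming from $\Omega(M)\subseteq\Omega(M-D)$, whereas the paper invokes the standard quotient-dimension estimate $\dim\bigl(\CL(A)/\CL(A')\bigr)\leq\deg(A)-\deg(A')$ for $A'\leq A$ (Stichtenoth, Lemma~1.4.8) applied to $A=\lmd(G,H)$, $A'=\lmd(G,H)-D$, so that the difference of dimensions is at most $\deg(D)=n$ directly. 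Your ``alternative'' telescoping argument, adjoining the $n$ rational points one at a time with $\ell(A)\leq\ell(A+P)\leq\ell(A)+\deg(P)$, is essentially the paper's quotient bound unpacked into single-point steps. The paper's (and your telescoping) approach is the more elementary one, needing nothing beyond the basic degree estimate; your Riemann--Roch route costs a heavier theorem but makes the exact error term $i(M)-i(M-D)$ visible, which is in the spirit of how the non-specialty hypotheses are exploited later in the paper. Either way the lemma is established.
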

\begin{proof}
To the place $P,$ we associate discrete valuation $v_p$. Now,
\[v_p = \left\{\begin{array}{ll}
-1 & \textrm{if }P=P_i,~1\leq i\leq n;\\
\textit{max}(m_i,m_i') & \textrm{if }P=Q_i,~ \text{from Eq. \eqref{eq-3.2}}. 
\end{array}\right.\]
It follows that $\ell(\lmd(G,H)-D)\leq\ell(\lmd)(G,H).$ 
\begin{eqnarray*}
\text{Further, }&& \dim\dfrac{\CL(\lmd(G,H))}{\CL(\lmd(G,H)-D)} \leq \deg((\lmd(G,H)))-\deg((\lmd(G,H))-D) \\
& \implies & \dim(\CL(\lmd(G,H))) - \dim(\CL(\lmd(G,H)-D)) \leq n \\
& \implies & 0 \leq \ell(\lmd((G,H))-\ell(\lmd(G,H) - D) \leq n \\
& \implies & |\ell(\lmd((G,H)-D)-\ell(\lmd(G,H))| \leq n.
\end{eqnarray*}
\end{proof}
\begin{lemma}\label{lm-3.2}
If $C_{\CL}(D,G)$ and $C_{\CL}(D,H)$ are two algebraic geometry codes, then
\begin{enumerate}
\item[(1).] $C_{\CL}(D,\gcd(G,H))\subseteq C_{\CL}(D,G)\cap C_{\CL}(D,H)$;
\item[(2).] $C_{\CL}(D,G)+ C_{\CL}(D,H)\subseteq C_{\CL}(D,\lmd(G,H))$. 
\end{enumerate}
\end{lemma}
\begin{proof}
For the proof of $(1)$, let $\bx \in C_{\CL}(D,\gcd(G,H))$, then there exists $f\in\CL(\gcd(G,H))$ such that
$$\bx = (f(P_1), f(P_2), \ldots, f(P_n)).$$
Since $f\in\CL(\gcd(G,H))$, $(f)\geq -\gcd(G,H)$ and this implies that $(f) \geq -G$ and $(f) \geq -H$.  Therefore, $\bx = (f(P_1), f(P_2), \ldots, f(P_n))\in C_{\CL}(D,G) \cap C_{\CL}(D,H)$. Hence, $C_{\CL}(D,\gcd(G,H)) \subseteq C_{\CL}(D,G)\cap C_{\CL}(D,H)$.

For the proof of $(2)$, let $\by \in C_{\CL}(D,G) + C_{\CL}(D,H)$, then there exist $f \in \CL(G)$ and $g \in \CL(H)$ such that
$$(f(P_1), f(P_2), \ldots, f(P_n))\in C_{\CL}(D,G) \text{ and } (g(P_1), g(P_2), \ldots, g(P_n)) \in C_{\CL}(D,H),$$ 
with $\by = ((f+g)(P_1), (f+g)(P_2), \ldots, (f+g)(P_n))$. To each place $P$, we associate discrete valuation $v_p$. Now, $v_P(f+g) \geq \min\{v_p(f), v_P(g)\} \geq -v_p(\lmd(G,H))$, which implies that $f + g \in \CL(\lmd(G,H))$. Hence, $\by = ((f+g)(P_1), (f+g)(P_2), \ldots, (f+g)(P_n)) \in C_{\CL}(D,\lmd(G,H))$.
Therefore, $C_{\CL}(D,G)+ C_{\CL}(D,H) \subseteq C_{\CL}(D,\lmd(G,H))$.
\end{proof}
Next, we present a theorem which will be used to construct an LCP of algebraic geometry codes from a smooth projective curve $\CX$ of genus $g = 0$ over $\FF_q$. 

\begin{thm}\label{th-3.1}
Let $C_{\CL}(D,G)$ and $C_{\CL}(D,H)$ be two algebraic geometry codes over $\FF_q$ of length $n$ with genus $g = 0$, such that $\ell(G) + \ell(H) = n$ and $2g-2 < \deg(G),\deg(H) < n$. Then the pair $(C_{\CL}(D,G), C_{\CL}(D,H))$ is LCP if the degree of the divisor $\gcd(G,H)$ is $g-1$.
\end{thm}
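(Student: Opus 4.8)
The plan is to verify the two conditions characterising an LCP pair, as recalled at the start of Section~\ref{sec:char}: that $\dim_{\FF_q} C_{\CL}(D,G) + \dim_{\FF_q} C_{\CL}(D,H) = n$, and that $C_{\CL}(D,G) \cap C_{\CL}(D,H) = \{0\}$. The first is immediate: since $2g-2 < \deg(G), \deg(H) < n$, the dimension formula recalled after Theorem~\ref{Riemann-Roch} gives $\dim_{\FF_q} C_{\CL}(D,G) = \ell(G)$ and $\dim_{\FF_q} C_{\CL}(D,H) = \ell(H)$, so the hypothesis $\ell(G) + \ell(H) = n$ already settles it. The substance of the theorem is therefore the triviality of the intersection.

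The first step toward that is a degree count. Write $G = \sum_i m_i Q_i$ and $H = \sum_i m_i' Q_i$. The coordinatewise identity $\min(m_i,m_i') + \max(m_i,m_i') = m_i + m_i'$ gives
\[
\deg(\gcd(G,H)) + \deg(\lmd(G,H)) = \deg(G) + \deg(H).
\]
On the other hand, $G$ and $H$ are non-special (their degrees exceed $2g-2$), so by~\eqref{eqn:ig} we have $\ell(G) = \deg(G) + 1 - g$ and $\ell(H) = \deg(H) + 1 - g$; hence $\ell(G) + \ell(H) = n$ forces $\deg(G) + \deg(H) = n - 2 + 2g$. Feeding in the hypothesis $\deg(\gcd(G,H)) = g-1$ yields $\deg(\lmd(G,H)) = n - 1 + g$ and therefore $\deg(\lmd(G,H) - D) = g - 1$. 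Since $g = 0$, both $\deg(\lmd(G,H) - D) = -1 < 0$ and $\deg(\gcd(G,H)) = -1 < 0$, so $\ell(\lmd(G,H) - D) = 0$ and $\ell(\gcd(G,H)) = 0$.

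Now let $\bx \in C_{\CL}(D,G) \cap C_{\CL}(D,H)$, and choose $f \in \CL(G)$ and $h \in \CL(H)$ with $\bx = (f(P_1), \dots, f(P_n)) = (h(P_1), \dots, h(P_n))$. Then $f - h$ vanishes at $P_1, \dots, P_n$, while for each $i$ one has $v_{Q_i}(f-h) \geq \min(-m_i, -m_i') = -\max(m_i,m_i')$, and $v_P(f-h) \geq 0$ at every other place; equivalently, $(f-h) \geq D - \lmd(G,H)$, i.e.\ $f - h \in \CL(\lmd(G,H) - D)$. Since $\ell(\lmd(G,H) - D) = 0$, we get $f = h$, so $f \in \CL(G) \cap \CL(H)$; but $\CL(G) \cap \CL(H) = \CL(\gcd(G,H))$ by the coordinatewise identity $\max(-m_i,-m_i') = -\min(m_i,m_i')$ (this refines, to an equality under our hypotheses, the containment used in the proof of Lemma~\ref{lm-3.2}(1)). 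As $\ell(\gcd(G,H)) = 0$, it follows that $f = 0$ and hence $\bx = 0$. Thus $C_{\CL}(D,G) \cap C_{\CL}(D,H) = \{0\}$, which together with the dimension count shows that $(C_{\CL}(D,G), C_{\CL}(D,H))$ is LCP.

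There is no genuine obstacle here: the argument is bookkeeping with divisor degrees together with the standard fact that a divisor of negative degree has trivial Riemann-Roch space. The one point worth isolating is the degree identity $\deg(\gcd(G,H)) + \deg(\lmd(G,H)) = \deg(G) + \deg(H)$, since it is exactly what converts the hypothesis $\deg(\gcd(G,H)) = g-1$ into the two vanishings $\ell(\gcd(G,H)) = 0$ and $\ell(\lmd(G,H)-D) = 0$ that drive the proof. It also makes transparent why the hypothesis must be an equality and why genus $0$ is essential: if $\deg(\gcd(G,H)) > g-1$ then $\CL(\gcd(G,H))$ may be nonzero; if $\deg(\gcd(G,H)) < g-1$ then $\CL(\lmd(G,H)-D)$ may be nonzero; and if $g \geq 1$ then $\deg(\lmd(G,H)-D) = g-1 \geq 0$, so the vanishing of $\ell(\lmd(G,H)-D)$ fails.
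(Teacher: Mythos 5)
Your proposal is correct and follows essentially the same route as the paper's proof: the degree identity $\deg(\gcd(G,H))+\deg(\lmd(G,H))=\deg(G)+\deg(H)$ combined with $\ell(G)+\ell(H)=n$ to force $\deg(\lmd(G,H)-D)=g-1$, then the vanishing of $\ell(\gcd(G,H))$ and $\ell(\lmd(G,H)-D)$ to kill the intersection. The only cosmetic differences are that you invoke ``negative degree implies trivial Riemann--Roch space'' where the paper argues via non-specialness, and you avoid the paper's case split on $f=g$ versus $f\neq g$ by passing through $\CL(\lmd(G,H)-D)$ first; both are sound.
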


\begin{proof}
By assumption, $\deg(\gcd(G,H)) = g-1 > 2g-2$ as $g = 0$. Therefore, $\gcd(G,H)$ is a non-special divisor, i.e., $i(\gcd(G,H)) = 0$. Then $\ell(\gcd(G,H)) = \deg(\gcd(G,H)) - g + 1 = 0$, i.e.,
\begin{equation}\label{eq-11}
\ell(\gcd(G,H))=0.
\end{equation}
As $\deg(\lmd(G,H)) \geq \deg(\gcd(G,H)) > 2g-2$, $\lmd(G,H)$ is also a non-special divisor.
By assumption, $\ell(G) + \ell(H) = n$, which implies that $\deg(G) - g + 1 + \deg(H) - g + 1 = n \implies \deg(G) + \deg(H) = n + 2g - 2$ as $G$ and $H$ are non-special. It is noted that $\deg(\lmd(G,H)) + \deg(\gcd(G,H)) = \deg(G) + \deg(H) = n + 2g - 2$. Therefore, $\deg(\lmd(G,H)) = n + g - 1$ as $\deg(\gcd(G,H)) = g - 1$. Hence, $\deg(\lmd(G,H)-D) = g-1 > 2g-2$. Hence, $\lmd(G,H)-D$ is non-special. Then $\ell(\lmd(G,H)-D) = \deg(\lmd(G,H)-D) - g + 1$ and that implies
\begin{equation}\label{eq-12}
\ell(\lmd(G,H)-D) = 0.
\end{equation}
It is enough to prove that  $C_{\CL}(D,G) \cap C_{\CL}(D,H) = \{0\}$. Let $\bx \in C_{\CL}(D,G)\cap C_{\CL}(D,H)$. Then there exist $f \in \CL(G)$ and $g \in \CL(H)$ such that $\bx = (f(P_1), f(P_2), \ldots, f(P_n)) = (g(P_1), g(P_2), \ldots, g(P_n))$. Then $(f-g)(P_i) = 0$, for all $1\leq i\leq n$.
If $f = g$, then $f\in \CL(G)\cap \CL(H)$. That implies $f \in \CL(\gcd(G,H))$.
By Equation~\eqref{eq-11}, we get $f = 0$. If $f \neq g$, it can be derived that $f-g \in \CL(\lmd(G,H)-D)$ as $(f-g)(P_i) = 0$, for all $1 \leq i \leq n$. Then by Equation~\eqref{eq-12}, we get $f-g = 0$ i.e., $f = g$, which is a contradiction.
Hence, $C_{\CL}(D,G) \cap C_{\CL}(D,H) = \{0\}$.
\end{proof}
The following result shows that, under some assumptions, when two algebraic geometry codes form a pair of LCP codes, the degree of the Greatest Commun Divisor of the involved divisors can be easily computed in terms of the genus.

\begin{prop}
Let $C_{\CL}(D,G)$ and $ C_{\CL}(D,H)$ be two algebraic geometry codes over $\FF_q$ of length $n$ with genus $g = 0$ and $2g-2 < \deg(G),\deg(H) < n$. If the pair $(C_{\CL}(D,G),C_{\CL}(D,H))$ is LCP and $\gcd(G,H)$ is non-special, then $\deg(\gcd(G,H)) = g - 1$.  
\end{prop}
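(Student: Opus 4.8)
The plan is to run the computation in Theorem~\ref{th-3.1} in reverse. Since we now \emph{assume} the pair is LCP and that $\gcd(G,H)$ is non-special, the goal is to extract $\deg(\gcd(G,H)) = g-1 = -1$ (recall $g=0$) from these hypotheses. The key relations are Eq.~\eqref{eq-3.1}, namely $\ell(G)+\ell(H)=n$, which holds because the codes are complementary and $2g-2<\deg(G),\deg(H)<n$ forces $\dim C_{\CL}(D,G)=\ell(G)$ and similarly for $H$; the Riemann--Roch consequence $\ell(G)=\deg(G)-g+1$, $\ell(H)=\deg(H)-g+1$ for non-special $G,H$; and the elementary divisor identity $\deg(\gcd(G,H))+\deg(\lmd(G,H))=\deg(G)+\deg(H)$ coming directly from $\min(m_i,m_i')+\max(m_i,m_i')=m_i+m_i'$ in Eq.~\eqref{eq-3.2}. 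Combining the first two gives $\deg(G)+\deg(H)=n+2g-2$, hence $\deg(\gcd(G,H))+\deg(\lmd(G,H))=n+2g-2=n-2$.

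First I would establish a lower bound on $\deg(\gcd(G,H))$. By Lemma~\ref{lm-3.2}(1), $C_{\CL}(D,\gcd(G,H))\subseteq C_{\CL}(D,G)\cap C_{\CL}(D,H)=\{0\}$, the last equality by the LCP hypothesis; so the code $C_{\CL}(D,\gcd(G,H))$ is the zero code. Since $\gcd(G,H)$ is non-special, $\ell(\gcd(G,H))=\deg(\gcd(G,H))-g+1$. If $\deg(\gcd(G,H))\ge g$ then $\ell(\gcd(G,H))\ge 1$, so there is a nonzero $f\in\CL(\gcd(G,H))$; but then $\deg(\gcd(G,H))\ge \deg((f)_\infty)\ge 0$, and more importantly the evaluation map on $\CL(\gcd(G,H))$ would have to send this $f$ to $0$, i.e.\ $f\in\CL(\gcd(G,H)-D)$. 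Here I need $\deg(\gcd(G,H))<n$ (which follows since $\gcd(G,H)\le G$ and $\deg(G)<n$), so $\deg(\gcd(G,H)-D)=\deg(\gcd(G,H))-n<0$ and thus $\CL(\gcd(G,H)-D)=\{0\}$, forcing $f=0$ — a contradiction. Hence $\deg(\gcd(G,H))\le g-1=-1$.

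Next I would get the matching upper bound by the same reasoning applied to $\lmd(G,H)$. By Lemma~\ref{lm-3.2}(2), $C_{\CL}(D,G)+C_{\CL}(D,H)\subseteq C_{\CL}(D,\lmd(G,H))$; since the sum is direct and equals $\FF_q^n$, we get $C_{\CL}(D,\lmd(G,H))=\FF_q^n$, so $\dim C_{\CL}(D,\lmd(G,H))=n$. On the other hand $\dim C_{\CL}(D,\lmd(G,H))\le \ell(\lmd(G,H))$ always, and when $\deg(\lmd(G,H))<n$ one has $\ell(\lmd(G,H))=\dim C_{\CL}(D,\lmd(G,H))=n$, giving $\deg(\lmd(G,H))-g+1\ge n$, i.e.\ $\deg(\lmd(G,H))\ge n-1$. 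Combined with $\deg(\gcd(G,H))+\deg(\lmd(G,H))=n-2$ this yields $\deg(\gcd(G,H))\le -1=g-1$ again — consistent — and crucially, pairing $\deg(\lmd(G,H))\ge n-1$ with the lower bound $\deg(\gcd(G,H))\le g-1$ via the sum identity forces both to be equalities: $\deg(\gcd(G,H))=g-1$ and $\deg(\lmd(G,H))=n+g-1$.

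The main obstacle I anticipate is the bookkeeping around when $\deg(\lmd(G,H))$ can exceed $n$: if $\deg(\lmd(G,H))\ge n$ the clean formula $\ell=\deg-g+1$ still holds (non-speciality is preserved upward, since $\gcd(G,H)$ non-special and $\gcd(G,H)\le\lmd(G,H)$), but the inequality $\dim C_{\CL}(D,\lmd(G,H))\le \ell(\lmd(G,H)-D)$-type corrections from Lemma~1 enter, and one must argue that $\deg(\lmd(G,H))=n+g-1<n$ only when $g\le 0$, which is exactly our case $g=0$ (giving $n-1<n$). So the argument is self-consistent, but I would write the $\lmd$ half carefully, perhaps phrasing it via $\dim C_{\CL}(D,\lmd(G,H)) = \ell(\lmd(G,H)) - \ell(\lmd(G,H)-D)$ and showing directly that surjectivity onto $\FF_q^n$ forces $\ell(\lmd(G,H)-D)=0$ and $\ell(\lmd(G,H))=n$, whence $\deg(\lmd(G,H))=n-1$ and $\deg(\gcd(G,H))=-1=g-1$.
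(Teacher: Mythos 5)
Your first paragraph is, in substance, the paper's own proof: from the LCP hypothesis and Lemma~\ref{lm-3.2}(1) you get $C_{\CL}(D,\gcd(G,H))=\{0\}$; the bound $\deg(\gcd(G,H))\le\deg(G)<n$ gives $\deg(\gcd(G,H)-D)<0$, so the evaluation map on $\CL(\gcd(G,H))$ is injective and hence $\ell(\gcd(G,H))=0$; and non-speciality gives $\ell(\gcd(G,H))=\deg(\gcd(G,H))-g+1$. But at that point you should simply set $\deg(\gcd(G,H))-g+1=0$ and conclude the \emph{equality} $\deg(\gcd(G,H))=g-1$. Instead you extract only the one-sided bound $\deg(\gcd(G,H))\le g-1$ (which you mislabel a lower bound). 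The reverse inequality you then go hunting for is already free: for \emph{any} non-special divisor $A$ one has $0\le\ell(A)=\deg(A)-g+1$, i.e.\ $\deg(A)\ge g-1$, so non-speciality of $\gcd(G,H)$ alone supplies $\deg(\gcd(G,H))\ge g-1$.

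Because of this, your second paragraph is unnecessary, and its closing step contains a genuine logical slip: the statements $\deg(\lmd(G,H))\ge n-1$ and $\deg(\gcd(G,H))\le g-1$ are \emph{equivalent} to each other via the identity $\deg(\gcd(G,H))+\deg(\lmd(G,H))=\deg(G)+\deg(H)=n-2$, so ``pairing'' them cannot force equality; you have derived the same one-sided bound twice. The alternative route sketched in your last sentence (show $\ell(\lmd(G,H)-D)=0$ and $\ell(\lmd(G,H))=n$ exactly, then use that $\lmd(G,H)$ is non-special because $\gcd(G,H)\le\lmd(G,H)$, to get $\deg(\lmd(G,H))=n-1$ exactly) does close the argument, but the step $\ell(\lmd(G,H)-D)=0$ requires $\deg(\lmd(G,H))<n$, which via the sum identity is again the inequality $\deg(\gcd(G,H))>-2$, i.e.\ it rests on the same non-speciality observation. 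Either way the proof collapses to the one-line Riemann--Roch computation of your first paragraph; I would delete the $\lmd$ analysis entirely.
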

\begin{proof}
By assumption, the pair $(C_{\CL}(D,G), C_{\CL}(D,H))$ is LCP, i.e.,   
$C_{\CL}(D,G) \cap C_{\CL}(D,H) = \{0\}.$  By the Lemma~\ref{lm-3.2}, we have $C_{\CL}(D,\gcd(G,H))\subseteq C_{\CL}(D,G)\cap C_{\CL}(D,H)=\{0\}.$ Therefore, $C_{\CL}(D,\gcd(G,H))=\{0\},$ which gives that $$\ell(\gcd(G,H))-\ell(\gcd(G,H)-D)=0.$$
It is easy to see that $\deg(\gcd(G,H)-D)<0,$ as $\deg(\gcd(G,H))\leq\deg(G)<\deg(D)=n.$ Therefore, $$\ell(\gcd(G,H))=\ell(\gcd(G,H)-D)=0.$$
As $\gcd(G,H)$ is non-special, from Equation~\eqref{eqn:ig}, $\deg(\gcd(G,H)) =  g - 1$.
\end{proof}
\begin{example}
Let $\CX$ be the projective line over $\FF_q$ and $\CO, \CQ$ are rational points on $\CX$ such that $\CO$ is the point at infinity, $\CQ$ is the original point. Let us consider $G = (q-s-2)\CO - (s+1)\CQ, H = s\CO + s\CQ$ and $D=\CX(\FF_q)\setminus\{\CO,\CQ\}$ be the divisors, where $0 < s \leq \frac{q-2}{2}$. Then $\gcd(G,H) = s\CO-(s+1)\CQ$ and $\deg(\gcd(G,H)) = -1$. Hence, by Theorem~\ref{th-3.1}, we have that $(C_{\CL}(D,G), C_{\CL}(D,H))$ forms an LCP of codes.
\end{example}
In the following, we derive from constructing LCP of algebraic geometry codes from a smooth projective curve $\CX$ of genus $g\neq 0$ over $\FF_q$.
\begin{thm}\label{th-3.2}
Let $C_{\CL}(D,G)$ and $ C_{\CL}(D,H)$ be two algebraic geometry codes over $\FF_q$ of length $n$ with genus $g\neq 0$, such that $\ell(G) + \ell(H) = n$ and $2g-2 < \deg(G),\deg(H) < n$. Then the pair $(C_{\CL}(D,G), C_{\CL}(D,H))$ is LCP if $\gcd(G,H)$ is a non-special divisor of degree $g-1$ and $\lmd(G,H)-D$ is a non-special divisor.
\end{thm}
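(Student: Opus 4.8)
The plan is to follow the template of the proof of Theorem~\ref{th-3.1} verbatim, identifying exactly where the hypothesis $g=0$ was silently used there and replacing those spots by the two explicit non-speciality assumptions in the statement. First I would record the two vanishing facts that drive everything. Since $\gcd(G,H)$ is non-special of degree $g-1$, the Riemann--Roch theorem (Theorem~\ref{Riemann-Roch}), via Equation~\eqref{eqn:ig}, gives $\ell(\gcd(G,H)) = \deg(\gcd(G,H)) - g + 1 = 0$. Next I would pin down $\deg(\lmd(G,H)-D)$: from $\ell(G)+\ell(H)=n$ together with the non-speciality of $G$ and $H$ (which holds because $\deg(G),\deg(H)>2g-2$) we get $\deg(G)+\deg(H) = n+2g-2$; since $\deg(\gcd(G,H))+\deg(\lmd(G,H)) = \deg(G)+\deg(H)$ by the definition in \eqref{eq-3.2} and $\deg(\gcd(G,H))=g-1$, this forces $\deg(\lmd(G,H)) = n+g-1$ and hence $\deg(\lmd(G,H)-D) = g-1$. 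Because $\lmd(G,H)-D$ is assumed non-special, Riemann--Roch again yields $\ell(\lmd(G,H)-D)=0$.

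With these two equalities in hand, the argument is identical to that of Theorem~\ref{th-3.1}. Since $2g-2<\deg(G)<n$ and $2g-2<\deg(H)<n$, the evaluation maps are injective, so $\dim C_{\CL}(D,G)=\ell(G)$ and $\dim C_{\CL}(D,H)=\ell(H)$, and these add up to $n$ by hypothesis; hence it suffices to show $C_{\CL}(D,G)\cap C_{\CL}(D,H)=\{0\}$. Given $\bx$ in the intersection, write $\bx=(f(P_1),\dots,f(P_n))=(g(P_1),\dots,g(P_n))$ with $f\in\CL(G)$ and $g\in\CL(H)$. If $f=g$, then $f\in\CL(G)\cap\CL(H)=\CL(\gcd(G,H))=\{0\}$, so $\bx=0$. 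If $f\neq g$, then $f-g$ vanishes at every $P_i$, so $v_{P_i}(f-g)\geq 1$; combined with $v_P(f-g)\geq\min\{v_P(f),v_P(g)\}\geq -v_P(\lmd(G,H))$ at the remaining places (exactly as in the proof of Lemma~\ref{lm-3.2}(2)), this gives $f-g\in\CL(\lmd(G,H)-D)=\{0\}$, contradicting $f\neq g$. Thus the intersection is trivial and the pair is LCP.

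The only genuine subtlety — and the reason the two non-speciality conditions must be hypotheses here rather than consequences, as they were for $g=0$ — is that when $g=0$ one automatically has $\deg(\gcd(G,H))=g-1=-1>2g-2$ and $\deg(\lmd(G,H)-D)=g-1=-1>2g-2$, so both divisors are forced to be non-special, whereas for general $g$ the degree $g-1$ carries no information about speciality. So there is no real obstacle beyond bookkeeping: the write-up should simply make clear that, apart from supplying the non-speciality of $\gcd(G,H)$ and of $\lmd(G,H)-D$ by hand, no idea beyond Theorem~\ref{th-3.1} is needed.
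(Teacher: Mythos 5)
Your proposal is correct and follows essentially the same route as the paper's own proof: establish $\ell(\gcd(G,H))=0$ and $\ell(\lmd(G,H)-D)=0$ from the two non-speciality hypotheses together with the degree bookkeeping $\deg(\gcd(G,H))+\deg(\lmd(G,H))=\deg(G)+\deg(H)=n+2g-2$, then run the case split $f=g$ versus $f\neq g$ exactly as in Theorem~\ref{th-3.1}. Your closing remark correctly identifies why the non-speciality conditions are automatic for $g=0$ but must be assumed for general $g$, which is precisely the point of the theorem.
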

\begin{proof}
It is enough to prove that  $C_{\CL}(D,G) \cap C_{\CL}(D,H) = \{0\}$.  Let $\bx \in  C_{\CL}(D,G) \cap C_{\CL}(D,H)$. Then there exist some $f \in \CL(G)$ and $g \in \CL(H)$ such that $\bx = (f(P_1), f(P_2), \ldots, f(P_n)) = (g(P_1), g(P_2), \ldots, g(P_n))$. This implies $(f-g)(P_i) = 0$, for all $1 \leq i \leq n$. If $f = g$ then $f \in \CL(G)\cap \CL(H)$. That implies
\begin{equation}\label{eq-3.4}
f\in\CL(\gcd(G,H)).
\end{equation}
If $f \neq g$, then (as proved in the Theorem~\ref{th-3.1})
\begin{equation}\label{eq-3.5}
f-g\in \CL(\lmd(G,H)-D).
\end{equation}
As $\gcd(G,H)$ is a non-special divisor of degree $g-1$, $\ell(\gcd(G,H)) = 0$ (from Equation~\eqref{eqn:ig}). Then, by Equation~\eqref{eq-3.4}, we have $f = 0$. Further, $\ell(G)+\ell(H) = n$ results that $\deg(G) + \deg(H) = n + 2g - 2$. Then $\deg(\lmd(G,H)) + \deg(\lmd(G,H)) = \deg(G)+ \deg(H) = n + 2g - 2$ and that implies $\deg(\lmd(G,H)) = n + g - 1$. Hence
$$\deg(\lmd(G,H)-D) = g-1.$$
Therefore, $\ell(\lmd(G,H)-D) = 0$ as $\lmd(G,H)-D$ is a non-special divisor of degree $g-1$. Then, by Equation~\eqref{eq-3.5}, we have $f = g,$ which contradicts that $f\neq g$. Therefore, $C_{\CL}(D,G)\cap C_{\CL}(D,H)=\{0\}$ and that implies the pair $(C_{\CL}(D,G),C_{\CL}(D,H))$ is LCP.
\end{proof}

The existence of a non-special divisor $G$ with degree $g-1$ was presented in~\cite{BL06}.

\begin{thm}
Let $C_{\CL}(D,G)$ and $ C_{\CL}(D,H)$ be two algebraic geometry codes over $\FF_q$ of length $n$ with genus $g \neq 0$ and $2g-2 < \deg(G),\deg(H) < n$. If the pair $(C_{\CL}(D,G), C_{\CL}(D,H))$ is LCP and $\deg(\gcd(G,H)) = g - 1$ then $\gcd(G,H)$ and  $\lmd(G,H)-D$ are non-special divisors.
\end{thm}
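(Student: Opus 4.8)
The plan is to prove this as the converse of Theorem~\ref{th-3.2}, running the same degree bookkeeping in reverse and using the two inclusions of Lemma~\ref{lm-3.2}. Write $K=\gcd(G,H)$ and $L=\lmd(G,H)$, and note that $\spt(K),\spt(L)\subseteq\spt(G)\cup\spt(H)$ are disjoint from $\spt(D)$, so the evaluation codes $C_{\CL}(D,K)$ and $C_{\CL}(D,L)$ make sense. Since $(C_{\CL}(D,G),C_{\CL}(D,H))$ is LCP we have $C_{\CL}(D,G)\cap C_{\CL}(D,H)=\{0\}$ and $C_{\CL}(D,G)\oplus C_{\CL}(D,H)=\FF_q^n$; combined with Lemma~\ref{lm-3.2} this forces $C_{\CL}(D,K)=\{0\}$ and $C_{\CL}(D,L)=\FF_q^n$.

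First I would handle $K$. The kernel of the evaluation map $\CL(K)\to\FF_q^n$ is $\CL(K-D)$, so $\dim C_{\CL}(D,K)=\ell(K)-\ell(K-D)=0$. Because $\deg K\le\deg G<n$ we get $\deg(K-D)=\deg K-n<0$, hence $\ell(K-D)=0$ and therefore $\ell(K)=0$. Since $\deg K=g-1$, the Riemann-Roch theorem (Theorem~\ref{Riemann-Roch}) gives $i(K)=\ell(K)-(\deg K+1-g)=0$, so $\gcd(G,H)$ is non-special.

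Next I would pin down $\deg(L-D)$. The hypotheses $2g-2<\deg G,\deg H<n$ make $\dim C_{\CL}(D,G)=\ell(G)=\deg G-g+1$ and $\dim C_{\CL}(D,H)=\ell(H)=\deg H-g+1$; since the pair is LCP these add to $n$, giving $\deg G+\deg H=n+2g-2$. As $\deg K+\deg L=\deg G+\deg H$ and $\deg K=g-1$, this yields $\deg L=n+g-1$, hence $\deg(L-D)=g-1$. Now use $C_{\CL}(D,L)=\FF_q^n$: its dimension is $\ell(L)-\ell(L-D)=n$. Since $\deg G>2g-2$ and $\deg G<n$ force $n>2g-2$, and $g\ge 1$, we get $\deg L=n+g-1>2g-2$, so $L$ is non-special with $\ell(L)=\deg L-g+1=n$; therefore $\ell(L-D)=0$, and one more application of Riemann-Roch with $\deg(L-D)=g-1$ gives $i(L-D)=0$, i.e.\ $\lmd(G,H)-D$ is non-special.

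I do not anticipate a genuine obstacle here: the argument reduces entirely to the inclusions in Lemma~\ref{lm-3.2}, the kernel description of the evaluation map, and arithmetic with degrees. The only points requiring care are justifying that the dimension identities $\dim C_{\CL}(D,A)=\ell(A)-\ell(A-D)$ and $\dim C_{\CL}(D,G)=\ell(G)$ apply — this is precisely where the support-disjointness and the bounds $2g-2<\deg G,\deg H<n$ are used — and verifying that $\deg(\lmd(G,H)-D)$ comes out to exactly $g-1$, so that vanishing of $\ell$ is equivalent to non-speciality via Riemann-Roch.
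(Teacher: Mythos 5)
Your proposal is correct and follows essentially the same route as the paper's proof: both deduce $C_{\CL}(D,\gcd(G,H))=\{0\}$ and $C_{\CL}(D,\lmd(G,H))=\FF_q^n$ from Lemma~\ref{lm-3.2}, use the kernel description of the evaluation map together with the degree arithmetic $\deg(\gcd(G,H))+\deg(\lmd(G,H))=\deg(G)+\deg(H)=n+2g-2$, and conclude non-speciality via Riemann--Roch. Your write-up is in fact slightly more careful than the paper's at the step where $\lmd(G,H)$ is declared non-special (you justify $\deg(\lmd(G,H))>2g-2$ explicitly), but there is no substantive difference.
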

\begin{proof}
As the pair $(C_{\CL}(D,G), C_{\CL}(D,H))$ is LCP, $C_{\CL}(D,G) \cap C_{\CL}(D,H) = \{0\}$ and $C_{\CL}(D,G) + C_{\CL}(D,H) = \FF_q^n$. That is, $\dim(C_{\CL}(D,G)) + \dim(C_{\CL}(D,H)) = \ell(G) + \ell(H) = n$. Then $\deg(G) +\deg(H) = n + 2g - 2$. Therefore,
$$\deg(\lmd(G,H)) + \deg(\lmd(G,H)) = \deg(G) + \deg(H) = n + 2g - 2.$$
By Lemma~\ref{lm-3.2}, we have 
\begin{eqnarray*}
C_{\CL}(D,\gcd(G,H)) & \subseteq & C_{\CL}(D,G)\cap C_{\CL}(D,G)=\{0\} \text{ and } \\
C_{\CL}(D,\lmd(G,H)) & \supseteq & C_{\CL}(D,G) + C_{\CL}(D,G) = \FF_q^n.
\end{eqnarray*}
Hence, $C_{\CL}(D,\gcd(G,H))=\{0\}~\text{and}~C_{\CL}(D,\lmd(G,H))=\FF_q^n$.\\
It implies that $\ell(\gcd(G,H)) - \ell(\gcd(G,H)-D) = 0$ and $\ell(\lmd(G,H)) -\ell(\lmd(G,H)-D) = n$. As $\deg(\gcd(G,H)-D) < 0$, $\ell(\gcd(G,H)) = 0$. By assumption, $\deg(\gcd(G,H)) = g-1$, it follows that $\gcd(G,H)$ is non-special and $\deg(\lmd(G,H)) = n+g-1$. Therefore, $\lmd(G,H)$ is also non-special and it follows that $\ell(\lmd(G,H)) = n$. Hence, $\ell(\lmd(G,H)-D) = 0$. As $\deg(\lmd(G,H)-D) = g-1$, it follows that $\lmd(G,H)-D$ is a non-special. 
\end{proof}
\begin{example}
Consider the projective curve $\CX:~Y^2Z+YZ^2=X^3$ of genus $1$ over $\FF_4:=\{0,1,\omega,\omega^2\}$.
Here $\CX(\FF_4) = \{\CO, \CQ, P_1, P_2, \ldots, P_7\} = \{(0,1,0), (0,0,1),(0,1,1), (\omega,\omega,1), (\omega,\omega^2,1), \\ (\omega^2,\omega,1), (\omega^2,\omega^2,1), (1,\omega,1), (1,\omega^2,1)\}$.\\
Further, consider $G = 6\CO - 2\CQ, H = 2\CO + \CQ$ and $D = \sum\limits_{i=1}^7 P_i$. Then $\gcd(G,H)=2\CO - 2\CQ$ and $\lmd(G,H)-D = 6\CO + \CQ - \sum\limits_{i=1}^7 P_i$ are non-special divisor. Then by Theorem~\ref{th-3.2}, the pair $(C_{\CL}(D,G), C_{\CL}(D,H))$ forms an LCP.
\end{example}

Next, we study the relationship among $C_{\CL}(D,H)$ and $C_{\CL}(D,G)^\perp$ where the pair $(C_{\CL}(D,G), C_{\CL}(D,H))$ forms  an LCP of codes.  Given two algebraic geometry codes (with some assumptions on the involved divisors), the two following results provide a necessary condition so that both codes form a pair of LCP codes.
\begin{thm}
Let $C_{\CL}(D,G)$ and $ C_{\CL}(D,H)$ be two algebraic geometry codes over $\FF_q$ of length $n$ with genus $g = 0$ and $2g-2 < \deg(G),\deg(H) < n$. If the pair $(C_{\CL}(D,G), C_{\CL}(D,H))$ is an LCP, then $C_{\CL}(D,H)$ and $C_{\Omega}(D,G) = C_{\CL}(D,G)^\perp$ are equivalent. 
\end{thm}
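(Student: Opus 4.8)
The plan is to exploit the genus-zero hypothesis to pin down all the relevant dimensions, and then to recognize that over the projective line, algebraic geometry codes are (generalized) Reed--Solomon codes, so that $C_{\CL}(D,G)^\perp = C_\Omega(D,G)$ is itself an algebraic geometry code $C_{\CL}(D,G^\perp)$ for a suitable divisor $G^\perp$; the goal is then to show $G^\perp$ and $H$ give equivalent codes. First I would record the numerical consequences of the LCP hypothesis: since the pair is LCP we have $\ell(G)+\ell(H)=n$, and because $g=0$ and $2g-2<\deg(G),\deg(H)<n$ both $G$ and $H$ are non-special, so $\ell(G)=\deg(G)+1$, $\ell(H)=\deg(H)+1$, hence $\deg(G)+\deg(H)=n-2$.

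Next I would use the standard duality for algebraic geometry codes: $C_{\CL}(D,G)^\perp = C_\Omega(D,G)$, and on the projective line one has an explicit Weil differential $\eta$ (for instance coming from $dt$ for a coordinate function $t$) whose divisor is $(\eta)$ with $\deg(\eta)=2g-2=-2$, together with the identity $C_\Omega(D,G) = C_{\CL}\big(D,\, D-G+(\eta)\big)$, possibly after the diagonal scaling by the vector $\ba$ of local residues $(\textit{res}_{P_i}(\eta))_i$; concretely $C_\Omega(D,G) = \ba\,C_{\CL}(D, D-G+(\eta))$. Write $G^\perp := D - G + (\eta)$; then $\deg(G^\perp) = n - \deg(G) - 2 = \deg(H)$ by the displayed degree count. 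So $C_{\CL}(D,G)^\perp$ is, up to the coordinate-wise scaling by $\ba \in (\FF_q^*)^n$, the algebraic geometry code $C_{\CL}(D,G^\perp)$ with $\deg(G^\perp)=\deg(H)$.

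The remaining step is to see that $C_{\CL}(D,G^\perp)$ and $C_{\CL}(D,H)$ are equivalent. Here I would invoke that on the projective line any divisor of a given degree is linearly equivalent to any other of that degree: $G^\perp \sim H$ since $\deg(G^\perp)=\deg(H)$ and $g=0$ forces the divisor class group to be $\mathbb Z$ via degree. If $(\psi) = G^\perp - H$ for some $\psi \in \FF_q(\CX)$, then multiplication by $\psi$ is an isomorphism $\CL(H)\to\CL(G^\perp)$, and evaluating shows $C_{\CL}(D,G^\perp) = \bb\, C_{\CL}(D,H)$ where $\bb = (\psi(P_1),\dots,\psi(P_n)) \in (\FF_q^*)^n$ (the entries are nonzero and finite because $\spt(\psi)=\spt(G^\perp)\cup\spt(H)$ is disjoint from $\spt(D)$, using $\spt(G),\spt(H),\spt((\eta))$ all avoiding $\spt(D)$ on the line). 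Combining the two scalings, $C_{\CL}(D,G)^\perp = \ba\bb\, C_{\CL}(D,H)$, and since $\ba\bb \in (\FF_q^*)^n$ this is exactly the statement that $C_{\CL}(D,G)^\perp$ and $C_{\CL}(D,H)$ are equivalent in the sense defined in Section~\ref{sec:pre} (with the trivial permutation).

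The main obstacle I anticipate is bookkeeping with supports and the differential: one has to make sure that $(\eta)$ can be chosen with support disjoint from $D$ (true on $\mathbb P^1$ by picking $t$ a coordinate finite and nonzero at all the $P_i$), that $G^\perp$ really is effective-enough / has the claimed degree, and that the residue vector $\ba$ has all nonzero entries so it is a genuine code equivalence; none of this is deep on the projective line, but it is the only place where the argument could go wrong if a divisor is chosen carelessly. A cleaner alternative, avoiding differentials altogether, is to argue directly: $C_{\CL}(D,H)$ has dimension $\deg(H)+1 = n-\deg(G)-1 = n - \dim C_{\CL}(D,G) = \dim C_{\CL}(D,G)^\perp$, and then use that both are MDS Reed--Solomon-type codes on $\mathbb P^1$ of the same length and dimension with evaluation points $P_1,\dots,P_n$ — any two such are monomially equivalent — which gives the conclusion, though the differential route is the more structural one and is what I would write up.
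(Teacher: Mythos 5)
Your route is genuinely different from the paper's, and it actually proves more. The paper's own proof is purely numerical: it shows that $C_{\CL}(D,H)$ and $C_{\Omega}(D,G)$ are both MDS with the same length, the same dimension $n-k$, and the same minimum distance $\deg(G)+2$, and concludes ``equivalent'' from the coincidence of parameters. Under the monomial notion of equivalence defined in Section~\ref{sec:pre} that is not by itself a complete argument (MDS codes with identical parameters need not be monomially equivalent), whereas your argument --- writing $C_{\Omega}(D,G)=\ba\,C_{\CL}(D,D-G+(\eta))$ and then using that on $\mathbb{P}^1$ the divisor class is determined by the degree, so that $D-G+(\eta)\sim H$ and the two Riemann--Roch spaces differ by multiplication by a function which is a unit at every $P_i$ --- exhibits the diagonal matrix realizing the equivalence. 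So your proof establishes the theorem in the strong sense, at the price of being specific to $g=0$; the paper's parameter count is what it is.

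Two bookkeeping slips, both in exactly the place you flagged as risky. First, the differential $\eta$ in the identity $C_{\Omega}(D,G)=\ba\,C_{\CL}(D,D-G+(\eta))$ must be normalized so that $v_{P_i}(\eta)=-1$ for every $i$ (e.g.\ $\eta=dh/h$ with $h=\prod_i(t-\alpha_i)$), with $\ba=(\mathrm{res}_{P_1}(\eta),\dots,\mathrm{res}_{P_n}(\eta))$; it should \emph{not} be chosen with support disjoint from $\spt(D)$. With your choice of $t$ finite and nonzero at all the $P_i$, the divisor $G^\perp=D-G+(dt)$ would contain every $P_i$ with coefficient $1$, and then the function $\psi$ realizing $G^\perp\sim H$ would have poles at the $P_i$, so the vector of its values would not lie in $(\FF_q^*)^n$. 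With the simple-pole normalization one gets $\spt(D-G+(\eta))\cap\spt(D)=\emptyset$ and the rest of your argument goes through. Second, a sign: for multiplication by $\psi$ to carry $\CL(H)$ into $\CL(G^\perp)$ you need $(\psi)=H-G^\perp$, not $G^\perp-H$ (with your convention it maps $\CL(G^\perp)$ to $\CL(H)$, which serves equally well). Finally, your ``cleaner alternative'' is valid only because both codes are generalized Reed--Solomon codes on the \emph{same} evaluation set; the bare statement that any two MDS codes of equal length and dimension are monomially equivalent is false, so the Reed--Solomon qualifier must be kept.
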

\begin{proof}
Suppose that $C_{\CL}(D,G)$ and $C_{\CL}(D,H)$ are $[n,k,d]$ and $[n,k',d']$ codes, respectively. Let $C_{\Omega}(D,G) = C_{\CL}(D,G)^\perp$ is an $[n,n-k,d^\perp]$ code. Now, we will prove $k' = n-k$ and $d' = d^\perp$.
As the pair $(C_{\CL}(D,G),C_{\CL}(D,H))$ is LCP, $k'= n-k$.

Here, $d \geq n - \deg(G)$ and $d' \geq n - \deg(H)$~\cite[Theorem II.2.2]{Sti08}.
As $\deg(G) > 2g-2$ and $\deg(H) > 2g-2$, $k = \deg(G) - g + 1$ and $k' = \deg(H) - g + 1$. Then, from the Singleton bound and $g = 0$, we have $d \leq n-\deg(G)$ and $d' \leq n-\deg(H)$. Hence, $d = n - \deg(G)$ and $d' = n - \deg(H)$ i.e., both $C_{\CL}(D,G)$ and $C_{\CL}(D,H)$ are MDS codes. 

As $k' = n - k = n - \deg(G)  +g - 1$, $\deg(H) = k' + g - 1 = n-\deg(G) + 2g - 2 \implies d' = n - \deg(H) = \deg(G) - 2g + 2$.
Since $C_{\Omega}(D,G)$ is the dual of $C_{\CL}(D,G)$, $C_{\Omega}(D,G)$ is an MDS code, and $\dim(C_{\Omega}(D,G)) = n-k = n - \deg(G) + g - 1$. 
Then $d^\perp = n-(n-k)+1=\deg(G) - 2g + 2 = d'$. Hence,  $C_{\CL}(D,H)$ and $C_{\Omega}(D,G)$  are equivalent. 
\end{proof}

\begin{thm}
Let $C_{\CL}(D,G)$ and $ C_{\CL}(D,H)$ be two algebraic geometry codes over $\FF_q$ of length $n$ with genus $g\neq 0$ and $2g-2 < \deg(G), \deg(H) < n$. Let $\omega$ be a Weil differential such that $(\omega)=G + H' - D$ for some divisor $H'$. If the pair $(C_{\CL}(D,G),C_{\CL}(D,H))$ is LCP and $H\sim H'$, then $C_{\CL}(D,H)$ and $C_{\Omega}(D,G) = C_{\CL}(D,G)^\perp$  are equivalent. 
\end{thm}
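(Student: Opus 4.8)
The plan is to mimic the genus-zero case, but replace the Singleton/MDS argument (which is unavailable when $g\neq 0$) by the structural identity $C_{\Omega}(D,G)=C_{\CL}(D,K+D-G)$ for any canonical divisor $K=(\omega)$, together with the hypothesis that links $H$ to a divisor $H'$ appearing in a Weil differential of the prescribed shape. Since the pair $(C_{\CL}(D,G),C_{\CL}(D,H))$ is LCP, we first get the dimension count $\ell(G)+\ell(H)=n$, hence $\dim C_{\CL}(D,H)=\ell(H)=n-\ell(G)=n-\dim C_{\CL}(D,G)=\dim C_{\Omega}(D,G)$; so the two codes in question already have the same length and dimension. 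It remains to exhibit an equivalence, i.e.\ a vector $\ba\in(\FF_q^\ast)^n$ and a permutation identifying them.

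Next I would unwind the residue description of the dual. By the standard fact that $C_{\CL}(D,G)^\perp=C_{\Omega}(D,G)$ and the differential-to-function correspondence, if $\omega$ is a Weil differential with $(\omega)=G+H'-D$, then one checks that $C_{\Omega}(D,G)=\ba\cdot C_{\CL}(D,H')$, where $a_i=\mathrm{res}_{P_i}(\omega/t_i)$-type local data — more precisely $a_i$ is the residue of $\omega$ at $P_i$ with respect to a chosen local uniformizer, which is a nonzero element of $\FF_q$ precisely because $v_{P_i}(\omega)=-1$ for each $i$ (this is where $(\omega)\geq G+H'-D$ forces simple poles at the $P_i$, using $\spt(G)\cap\spt(D)=\spt(H')\cap\spt(D)=\emptyset$). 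This is the analogue, valid for arbitrary genus, of the classical computation of the dual of an AG code; the key point is that $\ell(G)+\ell(H')=n$ also holds because $H'\sim H$ and $\ell$ depends only on the divisor class, so $C_{\CL}(D,H')$ has the right dimension for the residue map to be an isomorphism onto $C_{\Omega}(D,G)$.

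Then I would invoke equivalence of divisors: since $H\sim H'$, there is $u\in\FF_q(\CX)$ with $(u)=H-H'$, and multiplication by $u$ sends $\CL(H')$ isomorphically onto $\CL(H)$, inducing $C_{\CL}(D,H')=\mathbf b\cdot C_{\CL}(D,H)$ with $b_i=u(P_i)\in\FF_q^\ast$ (nonzero and finite since $\spt(H)\cup\spt(H')$ is disjoint from $\spt(D)$, so $u$ has neither a zero nor a pole at any $P_i$). Combining the two displays gives $C_{\Omega}(D,G)=\ba\cdot\mathbf b\cdot C_{\CL}(D,H)=(\ba\mathbf b)\cdot C_{\CL}(D,H)$ with $\ba\mathbf b\in(\FF_q^\ast)^n$, which is exactly the statement that $C_{\CL}(D,H)$ and $C_{\Omega}(D,G)=C_{\CL}(D,G)^\perp$ are equivalent (with the trivial permutation).

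The main obstacle is the middle step: establishing $C_{\Omega}(D,G)=\ba\cdot C_{\CL}(D,H')$ cleanly, i.e.\ showing that the residue map $\omega'\mapsto(\mathrm{res}_{P_i}(\omega'))_i$ on $\Omega(G-D)$ coincides, up to the fixed nonzero scaling vector $\ba$, with the evaluation map on $\CL(H')$ under the bijection $f\mapsto f\omega$ between $\CL(G+H'-D-(G-D))=\CL(H')$ and $\Omega(G-D)$. One must verify that $f\omega\in\Omega(G-D)$ iff $f\in\CL(H')$ (immediate from $(\omega)=G+H'-D$), that the map is onto (dimension count via $H'\sim H$ and the LCP hypothesis), and that $\mathrm{res}_{P_i}(f\omega)=f(P_i)\,\mathrm{res}_{P_i}(\omega)$ since $f$ is regular at $P_i$ — all of which is routine once the bookkeeping of supports and valuations is in place, but it is the only place where genus $g\neq 0$ genuinely changes the earlier argument, so I would write it out carefully.
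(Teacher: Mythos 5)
Your proposal is correct and follows essentially the same route as the paper: identify $C_{\Omega}(D,G)$ with (a nonzero rescaling of) $C_{\CL}(D,H')$ via the differential $\omega$, and then use $H\sim H'$ to pass to $C_{\CL}(D,H)$. You are in fact more careful than the paper at the final step, where the paper only notes $\deg(H)=\deg(H')$ and $\ell(H)=\ell(H')$, whereas you correctly realize the equivalence concretely through the residues $\mathrm{res}_{P_i}(\omega)$ and through evaluation at the $P_i$ of a function $u$ with $(u)=H-H'$, both of which are nonzero by the support-disjointness assumptions.
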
  
\begin{proof}
Since $\omega$ be a Weil differential such that $(\omega)=G + H'- D$ for some divisor $H'$, $$C_{\CL}(D,G)^\perp=C_{\Omega}(D,G)=C_{\CL}(D,H').$$ As $H \sim  H'$, $\deg(H)=\deg(H')$ and $\ell(H)=\ell(H')$. It follows that $C_{\CL}(D,H)$ and $C_{\Omega}(D,G)$  are equivalent.
\end{proof}

\section{LCP of codes from elliptic curves}\label{sec:elpt}
The main purpose of this section is to construct some LCP of algebraic geometry codes from elliptic curves. It is well-known that an elliptic curve is an algebraic curve with genus $1$ and a divisor of an elliptic curve $\CE$ is not a principal if and only if every $0$ degree divisor is a non-special. Then, the following result can be derived from Theorem ~\ref{th-3.2}. 

\begin{thm}\label{th-4.1}
Let $G, H$ and $D = P_1 + P_2 + \cdots + P_n$ be three divisors over an elliptic curve $\CE$ over $\FF_q$ with $\spt(G) \cap \spt(D) = \spt(H) \cap \spt(D) = \emptyset$ and $\dim(C_{\CL}(D,G)) + \dim(C_{\CL}(D,H)) = n$, where $0 < \deg(G), \deg(H) < n$. Then, the pair $(C_{\CL}(D,G),C_{\CL}(D,H))$ is LCP if $\gcd(G,H)$ is not a principal divisor of degree $0$ and  $\lmd(G,H)-D$ is not a principal divisor.
\end{thm}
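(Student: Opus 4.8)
The plan is to deduce this from Theorem~\ref{th-3.2} by verifying its hypotheses in the special case of an elliptic curve. The key point is the characterization recalled just before the statement: on an elliptic curve $\CE$ (genus $g=1$), a divisor of degree $0$ is non-special if and only if it is not principal, and more generally every divisor of degree $\geq 0$ which is not equivalent to $g\cdot P$ for a rational point... actually the cleaner fact to use is simply that for $g=1$ a divisor is non-special precisely when its degree is positive, or its degree is $0$ and it is not principal. So first I would translate the two hypotheses of the present theorem into the language of Theorem~\ref{th-3.2}: ``$\gcd(G,H)$ is not a principal divisor of degree $0$'' together with ``$\deg(\gcd(G,H))=g-1=0$'' is exactly ``$\gcd(G,H)$ is a non-special divisor of degree $g-1$''; and ``$\lmd(G,H)-D$ is not a principal divisor'' needs to be upgraded to ``$\lmd(G,H)-D$ is non-special''.

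Second, I would establish that $\deg(\gcd(G,H))=0$. This is forced: by hypothesis $\dim(C_{\CL}(D,G))+\dim(C_{\CL}(D,H))=n$, and since $0<\deg(G),\deg(H)<n$ with $g=1$ we have $2g-2=0<\deg(G),\deg(H)<n$, so $\ell(G)=\deg(G)$ and $\ell(H)=\deg(H)$ by Equation~\eqref{eqn:ig}; hence $\deg(G)+\deg(H)=n=n+2g-2$. Since $\deg(\gcd(G,H))+\deg(\lmd(G,H))=\deg(G)+\deg(H)=n$ and $\deg(\lmd(G,H))\geq\max(\deg(G),\deg(H))$, one sees $\deg(\gcd(G,H))\leq\min(\deg(G),\deg(H))$; but I still need the exact value $0$. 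Here I would invoke the hypothesis that $\gcd(G,H)$ is ``not a principal divisor of degree $0$'' — the phrasing presupposes $\deg(\gcd(G,H))=0$, so this is really an assumption of the theorem rather than something to be derived. Granting $\deg(\gcd(G,H))=0$, then $\deg(\lmd(G,H))=n$ and $\deg(\lmd(G,H)-D)=n-n=0$.

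Third, with degrees pinned down, I apply the elliptic-curve non-special criterion. A degree-$0$ divisor on $\CE$ is non-special iff it is not principal. Hence ``$\gcd(G,H)$ not principal of degree $0$'' gives ``$\gcd(G,H)$ non-special of degree $g-1$'', and ``$\lmd(G,H)-D$ not principal'' — which, since $\deg(\lmd(G,H)-D)=0$, is the same as ``$\lmd(G,H)-D$ non-special''. Now all hypotheses of Theorem~\ref{th-3.2} with $g=1\neq 0$ are met: $\ell(G)+\ell(H)=n$, $2g-2<\deg(G),\deg(H)<n$, $\gcd(G,H)$ non-special of degree $g-1$, and $\lmd(G,H)-D$ non-special. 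Theorem~\ref{th-3.2} then yields that $(C_{\CL}(D,G),C_{\CL}(D,H))$ is LCP, completing the proof.

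The step I expect to be the only real subtlety is the bookkeeping around $\deg(\gcd(G,H))$: one must be careful that the hypothesis ``not a principal divisor of degree $0$'' is being read as ``$\deg(\gcd(G,H))=0$ and $\gcd(G,H)\not\sim 0$'' rather than ``$\gcd(G,H)$ is not [a principal divisor of degree $0$]'' (which would be vacuous whenever the degree is nonzero). Everything else is a direct quotation of the genus-$1$ specialization of Theorem~\ref{th-3.2} together with the standard fact, already recalled in the text, that on an elliptic curve a degree-$0$ divisor is non-special exactly when it fails to be principal. No new computation beyond $\deg(\lmd(G,H))+\deg(\gcd(G,H))=\deg(G)+\deg(H)=n$ is needed.
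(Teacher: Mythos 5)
Your proposal is correct and follows exactly the route the paper intends: the paper gives no explicit proof, merely remarking that on an elliptic curve a degree-$0$ divisor is non-special iff it is not principal and that the result "can be derived from Theorem~\ref{th-3.2}", which is precisely the reduction you carry out. Your careful bookkeeping ($\deg(G)+\deg(H)=n$, hence $\deg(\lmd(G,H)-D)=0$) and your reading of ``not a principal divisor of degree $0$'' as an assumption that $\deg(\gcd(G,H))=0$ are both consistent with how the paper uses the theorem in its subsequent example.
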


Now, we shall use elliptic curves in Weierstrass form to construct the LCP of codes. For $a, b, c \in \FF_q$, we denote an affine elliptic curve by the equation
$$\CE_{a,b,c}:~y^2 + ay = x^3 + bx + c,$$
and the total number of rational points on $\CE_{a,b.c}$ by $\mathcal{N}$. Let $\mathcal{S}$ be the set of $x$-components of the affine points on $\CE_{a,b,c}$, i.e., $$\mathcal{S}:=\{\alpha \in \FF_q~|~\exists \beta\in\FF_q~\text{such~that}~\beta^2 + a\beta = \alpha^3 + b\alpha + c\}.$$
For $a = 1$, any $\alpha \in \mathcal{S}$ gives exactly two points with $x$-component $\alpha$, and we denote these two points corresponding to $\alpha$ by $P_{\alpha}^{+}$ and $P_{\alpha}^{-}$. Let denote the point at infinity as $\CO$. Then the set of all rational points of $\CE_{1,b,c}$ over $\FF_q$ are $$\{P_{\alpha}^+~|~\alpha \in \mathcal{S}\} \cup \{P_{\alpha}^-~|~\alpha \in \mathcal{S}\} \cup \{\CO\}.$$
For any positive integer $r$, we denote the set $\CE[r]$ as $$\CE[r]:=\{P~|~\underbrace{P \oplus P \oplus \cdots \oplus P}_r = \CO\}.$$ We refer to \cite{Sil85} for more details on elliptic curves.
\begin{thm}
Let $\{\alpha_0, \alpha_1, \ldots, \alpha_s\} \subseteq \mathcal{S}$ on $\CE_{1,b,c}$ and $D = \sum_{i=1}^s P^{+}_{\alpha_i} + \sum_{i=1}^s P^{-}_{\alpha_i}, G = r\CO + rP^{-}_{\alpha_0}$ and $H = (2s-r)\CO - rP^{-}_{\alpha_0}$ with $P^{-}_{\alpha_0}\notin E[r]$ and $0 < r < s$. Then, the pair $(C_{L}(D,G), C_{L}(D,H))$ is an LCP of codes.
\end{thm}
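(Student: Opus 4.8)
The plan is to apply Theorem~\ref{th-4.1} directly, so the entire proof reduces to verifying its hypotheses for the specific divisors at hand. First I would record the elementary facts: the curve $\CE_{1,b,c}$ has genus $g=1$; the length here is $n = 2s$ since $D = \sum_{i=1}^s P^{+}_{\alpha_i} + \sum_{i=1}^s P^{-}_{\alpha_i}$ has $2s$ distinct rational points in its support (for each $\alpha_i \in \mathcal S$ the two points $P^{+}_{\alpha_i}$ and $P^{-}_{\alpha_i}$ are distinct because $a=1 \ne 0$); and $\spt(G) \cap \spt(D) = \spt(H)\cap \spt(D) = \emptyset$ because $G$ and $H$ are supported on $\{\CO, P^{-}_{\alpha_0}\}$, and $\alpha_0 \notin \{\alpha_1,\dots,\alpha_s\}$ while $\CO$ is not an affine point. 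I would also note $\deg(G) = \deg(H) = 2s = n$ (using $r + r$ and $(2s-r) + (-r) = 2s - 2r$)---wait, here I must be careful: $\deg(G) = r + r = 2r$ and $\deg(H) = (2s-r) - r = 2s - 2r$, both strictly between $0$ and $n = 2s$ precisely because $0 < r < s$. Since on an elliptic curve any divisor of degree $> 2g-2 = 0$ is non-special, the dimension formula gives $\ell(G) = \deg(G) = 2r$ and $\ell(H) = \deg(H) = 2s - 2r$, hence $\dim C_{\CL}(D,G) + \dim C_{\CL}(D,H) = 2r + (2s-2r) = 2s = n$, so the rank hypothesis of Theorem~\ref{th-4.1} holds.

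The substance is the two non-principality conditions. Computing from Equation~\eqref{eq-3.2}: $\gcd(G,H) = \min(r, 2s-r)\CO + \min(r, -r)P^{-}_{\alpha_0} = \min(r,2s-r)\CO - rP^{-}_{\alpha_0}$, and since $0 < r < s$ we have $2s - r > s > r$, so $\min(r,2s-r) = r$, giving $\gcd(G,H) = r\CO - rP^{-}_{\alpha_0} = r\bigl(\CO - P^{-}_{\alpha_0}\bigr)$, a divisor of degree $0$ as required. The key claim is that this divisor is \emph{not} principal. On an elliptic curve with origin $\CO$, a degree-$0$ divisor $\sum n_P P$ is principal if and only if $\bigoplus n_P P = \CO$ in the group law; here $r\bigl(\CO - P^{-}_{\alpha_0}\bigr)$ is principal iff $r \cdot (\ominus P^{-}_{\alpha_0}) = \CO$, i.e., iff $r \cdot P^{-}_{\alpha_0} = \CO$ (negation is a group automorphism), i.e., iff $P^{-}_{\alpha_0} \in \CE[r]$. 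This is excluded by the hypothesis $P^{-}_{\alpha_0} \notin E[r]$, so $\gcd(G,H)$ is not principal. For $\lmd(G,H) = \max(r,2s-r)\CO + \max(r,-r)P^{-}_{\alpha_0} = (2s-r)\CO + rP^{-}_{\alpha_0}$, we get $\lmd(G,H) - D = (2s-r)\CO + rP^{-}_{\alpha_0} - \sum_{i=1}^s P^{+}_{\alpha_i} - \sum_{i=1}^s P^{-}_{\alpha_i}$, which has degree $(2s-r) + r - 2s = 0$. It is principal iff $(2s-r)\CO \oplus r P^{-}_{\alpha_0} \ominus \bigoplus_{i=1}^s(P^{+}_{\alpha_i} \oplus P^{-}_{\alpha_i}) = \CO$; but for each $i$, $P^{+}_{\alpha_i}$ and $P^{-}_{\alpha_i}$ share the same $x$-coordinate, and on $\CE_{1,b,c}$ the two points with a given $x$-coordinate are $(\alpha, \beta)$ and $(\alpha, -\beta - 1)$, which are negatives of each other only when the curve is in short Weierstrass form---here with the $ay$ term I must check the involution $P \mapsto P^{\mp}$ is $P \mapsto \ominus P$, which holds precisely because $a=1$ makes $P^{+}_{\alpha} \oplus P^{-}_{\alpha}$ equal to $\CO$ (the vertical line through them meets $\CE$ only at these two points and $\CO$). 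Hence $\sum_{i=1}^s(P^{+}_{\alpha_i}\oplus P^{-}_{\alpha_i}) = \CO$ in the group, so the principality condition collapses to $r P^{-}_{\alpha_0} = \CO$, again excluded by $P^{-}_{\alpha_0} \notin E[r]$.

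The main obstacle I anticipate is the bookkeeping around the group law of $\CE_{1,b,c}$ in the non-short form $y^2 + y = x^3 + bx + c$: one needs that $P^{+}_{\alpha}$ and $P^{-}_{\alpha}$ are mutually inverse so that the chord-and-tangent translation of "principal divisor" into "sum is $\CO$" goes through cleanly, and one must double-check that the hypothesis is stated as $P^{-}_{\alpha_0} \notin E[r]$ rather than $\CE[r]$ (a harmless notational slip in the statement). Everything else---the degree computations, the non-speciality of every divisor of positive degree, the explicit forms of $\gcd$ and $\lmd$---is routine once these identifications are in place. So the write-up would be: (i) identify $n, g$, the supports, and the degrees; (ii) invoke non-speciality on an elliptic curve and deduce $\ell(G)+\ell(H)=n$; (iii) compute $\gcd(G,H) = r(\CO - P^{-}_{\alpha_0})$ and $\lmd(G,H) - D$, show both have degree $0$; (iv) use the group-law criterion for principality together with $P^{-}_{\alpha_0} \notin \CE[r]$ and the inverse relation $P^{+}_{\alpha_i} \oplus P^{-}_{\alpha_i} = \CO$ to conclude neither is principal; (v) apply Theorem~\ref{th-4.1}.
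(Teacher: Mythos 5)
Your proposal is correct and follows the same route as the paper: compute $\gcd(G,H)=r\CO-rP^{-}_{\alpha_0}$ and $\lmd(G,H)-D$, check that both have degree $0$ and are non-principal via the group-law criterion together with the hypothesis $P^{-}_{\alpha_0}\notin \CE[r]$, and then invoke Theorem~\ref{th-4.1}. You in fact supply more detail than the paper's own proof, which asserts the non-principality of $\lmd(G,H)-D$ without spelling out the cancellation $P^{+}_{\alpha_i}\oplus P^{-}_{\alpha_i}=\CO$ that your argument makes explicit.
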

\begin{proof}
Here, $\gcd(G,H) = r\CO-rP^{-}_{\alpha_0}$. Then $\deg(\gcd(G,H))=0$. Further, $\gcd(G,H)$ is not a principal as $rP^{-}_{\alpha_0}\neq \CO$. Similarly, $\deg(\lmd(G,H)-D)=0$, and $\deg(\lmd(G,H)-D)$ is not a principal divisor. Then, by Theorem~\ref{th-4.1}, we obtain the required result.
\end{proof}
\begin{example}
Consider the elliptic curve $$\CE~:~y^2+y=x^3+x+1.$$  Then the set of all rational points over $\FF_8$ are $\{\CO=(0,1,0),~P_{1}^{+}=(\omega,0,1),~P_{1}^{-}=(\omega,1,1),~ P_{2}^{+}=(\omega^2,0,1),~P_{2}^{-}=(\omega^2,1,1),~P_{3}^{+}=(\omega^3,\omega,1),~P_{3}^{-}=(\omega^3,\omega^3,1),~P_{4}^{+}=(\omega^4,0,1),~P_{4}^{-}=(\omega^4,0,1),~P_{5}^{+}=(\omega^5,\omega^4,1),~P_{5}^{-}=(\omega^5,\omega^5,1),~P_{6}^{+}=(\omega^6,\omega^2,1),~P_{6}^{-}=(\omega^6,\omega^6,1)\},$ where $\omega$ is a primitive element of $\FF_8$. Here $s = 6$. Consider $D = \sum\limits_{i=2}^6(P^+_{i}+P^{-}_{i}), G = 4\CO+4P^{-}_{1}$ and $H=8\CO-4P^{-}_{1}$. It is easy to see that $4P_{1}^{-}\neq \CO.$ Now, $\gcd(G,H)=4\CO-4P^{-}_{1}$ is not a principal divisor. Similarly, $\lmd(G,H)=8\CO+4P^{-}_{1}-\sum_{i=2}^{6}(P^{+}_{i}+P^{-}_{i})$ is not principal. Hence, the pair $(C_{\CL}(D,G),C_{\CL}(D,H))$ forms LCP.
\end{example}

\section{LCP codes from arbitrary algebraic geometry codes}\label{sec:agc}
This section aims to construct LCP codes from a given algebraic geometry code. At first, we deal with algebraic geometry codes of even length. 
\begin{thm}\label{th-a}
Let $C_{\CL}(D,G)$ be an $[n,\frac{n}{2},d]$ algebraic geometry code of even length, where $G$ and $D = P_1 + P_2 + \cdots + P_n$ are divisors of a smooth projective curve $\CX$ of genus $g$ with  $2g-2 < \deg(G) < n$ and $\spt(G)\cap \spt(D) = \emptyset$. If $h \in \FF_q(\CX)$ is a function such that $(h(P_1), h(P_2), \ldots, h(P_n)) \in \FF_q^n \setminus \{(0, 0, \ldots, 0), (1, 1, \ldots, 1)\}$ and $hf \notin \CL(G)$ for all $f \in \CL(G)$. Then there exists $\ba \in \FF_q^n$ such that $(C_{\CL}(D,G), \ba C_{\CL}(D,G))$ forms an LCP of codes.
\end{thm}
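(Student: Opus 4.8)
The plan is to build the vector $\ba$ componentwise from the given function $h$ and exploit Theorem~\ref{th-3.1}-style hull arguments at the level of codewords rather than divisors. Concretely, set $\ba = (1 - h(P_1), 1 - h(P_2), \ldots, 1 - h(P_n))$, or more robustly, since we only need $\ba \in \FF_q^n$ with all entries nonzero to make $\ba C_{\CL}(D,G)$ a genuine code of the same dimension, I would first argue that we may assume $h(P_i) \neq 1$ for all $i$ (replacing $h$ by a suitable scalar multiple or translate if necessary, using the freedom in the hypothesis that the evaluation vector avoids the all-ones and all-zeros vectors). The key structural point is this: a codeword lies in $C_{\CL}(D,G) \cap \ba C_{\CL}(D,G)$ if and only if there exist $f, f' \in \CL(G)$ with $f(P_i) = (1 - h(P_i)) f'(P_i)$ for every $i$, i.e. $f(P_i) = f'(P_i) - h(P_i) f'(P_i)$, so that $(f - f')(P_i) = -(h f')(P_i)$ for all $i$.

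The heart of the argument is to show this forces the common codeword to be zero. Since $\dim C_{\CL}(D,G) = n/2$ and $\dim \ba C_{\CL}(D,G) = n/2$, by dimension count $C_{\CL}(D,G) \oplus \ba C_{\CL}(D,G) = \FF_q^n$ is equivalent to trivial intersection, so it suffices to prove $C_{\CL}(D,G) \cap \ba C_{\CL}(D,G) = \{0\}$. Suppose $\bx$ lies in the intersection, coming from $f \in \CL(G)$ and from $\ba$ applied to $f' \in \CL(G)$. Then $f - f' + h f'$ vanishes at all of $P_1, \ldots, P_n$. I would like to conclude $f - f' + h f' = 0$ as a function, by a degree/Riemann--Roch obstruction: if $f - f' + h f'$ were a nonzero function vanishing at $n$ points it would need enough poles, and the pole structure is controlled by $G$ together with the poles of $h$. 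This is where the hypothesis $hf' \notin \CL(G)$ does its work — but note it only tells us $hf' \notin \CL(G)$, not that $hf'$ has poles bounded by some fixed auxiliary divisor, so the clean move is instead: from $(f - f') = -h f'$ on the support of $D$ and $f - f', hf'$ being honest functions, deduce $f - f' = -hf'$ identically whenever the difference vanishes on sufficiently many points relative to its pole divisor; then $f + (h-1)f' $... rearranging, $f = (1-h)f'$, and since $f \in \CL(G)$ while $(1-h)f' \notin \CL(G)$ unless $f' = 0$ (here I use that $hf' \notin \CL(G)$ and $f' \in \CL(G)$ force $(1-h)f' = f' - hf' \notin \CL(G)$ when $f' \neq 0$), we get $f' = 0$, hence $f = 0$ and $\bx = 0$.

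The step I expect to be the genuine obstacle is justifying that $f - f' + hf'$ vanishing at all $n$ points $P_i$ actually implies it is the zero function — the hypotheses as stated bound the poles of $f$ and $f'$ by $G$ with $\deg G < n$, but $hf'$ may have additional poles coming from $h$, so a naive "nonzero function with $n$ zeros needs pole degree $\geq n$" estimate does not immediately close. I would handle this by working in the Riemann--Roch space $\CL(G + (h)_\infty)$ or equivalently $\CL(\lmd(G, G + (h)_\infty))$-type space: the function $f - (1-h)f'$ lies in $\CL(E)$ for $E = G + (h)_\infty$ (with $(h)_\infty$ the polar part of $h$, disjoint from $\spt D$ by the hypothesis that $h$ is regular on the $P_i$), and if it vanishes at $D$ it lies in $\CL(E - D)$; one then needs $\deg(E - D) < 0$, i.e. $\deg(h)_\infty < n - \deg G$. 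Since the hypothesis does not quantify $\deg(h)_\infty$, the honest completion of the proof either (i) adds the implicit assumption that $h$ has small enough pole degree, or (ii) chooses $\ba$ directly as the evaluation vector of $1-h$ and argues combinatorially: an element of $\CL(G)$ whose evaluation equals $(1-h)$ times another element's evaluation, combined with $\deg G < n$, pins down $f$ and $f'$ uniquely up to the kernel of evaluation, which is trivial on $\CL(G)$ since $\deg G < n$ makes $\varphi$ injective. I would present route (ii): because $\deg G < n$, the evaluation map $\CL(G) \to \FF_q^n$ is injective, so $C_{\CL}(D,G) \cong \CL(G)$; then $\bx \in C_{\CL}(D,G) \cap \ba C_{\CL}(D,G)$ gives unique $f, f' \in \CL(G)$ with $f(P_i) = (1 - h(P_i))f'(P_i)$, i.e. the function $f - f' + hf'$ vanishes on $D$; and the hypothesis "$hf \notin \CL(G)$ for all $f \in \CL(G)$" is precisely calibrated so that $hf'$ cannot be expressed as a difference of two elements of $\CL(G)$ unless $f' = 0$ — giving $f' = 0$, $f = 0$, $\bx = 0$, completing the proof.
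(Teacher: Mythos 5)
Your overall strategy is the same as the paper's: take $\ba$ to be the evaluation vector of a function built from $h$, get the dimension count from $\ell(G)=\tfrac{n}{2}$, and reduce the LCP property to showing that a function of the form $f-(\cdot)f'$ vanishing at all of $P_1,\dots,P_n$ must vanish identically, at which point the hypothesis $hf'\notin\CL(G)$ forces $f'=0$. (The paper takes $\ba=(h(P_1),\dots,h(P_n))$ rather than the evaluation of $1-h$, sets $G'=G-(h)$, and uses the bijection $\CL(G)\to\CL(G')$, $f\mapsto hf$, to identify $\ba C_{\CL}(D,G)$ with $C_{\CL}(D,G')$; your $1-h$ variant is essentially cosmetic, and note that the theorem only asks for $\ba\in\FF_q^n$, so zero coordinates are permitted and your preliminary normalisation $h(P_i)\neq 1$ is not needed.)

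The gap is exactly where you locate it, and your route (ii) does not close it. From $\bx$ in the intersection you only get $(f-f'+hf')(P_i)=0$ for all $i$, i.e.\ $hf'$ and $f'-f$ have the same \emph{evaluation vector}. The hypothesis $hf'\notin\CL(G)$ produces a contradiction only once you know $hf'=f'-f$ as elements of $\FF_q(\CX)$, and injectivity of the evaluation map on $\CL(G)$ is of no help precisely because $hf'$ does not lie in $\CL(G)$: you would need injectivity of evaluation on a larger space such as $\CL(G+(h)_\infty)$ containing both $hf'$ and $\CL(G)$, which is exactly the missing inequality $\deg\,(h)_\infty<n-\deg(G)$ you flagged. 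Declaring the hypothesis ``precisely calibrated'' so that $f'=0$ restates the desired conclusion rather than proving it. For context, the paper's own proof passes through $f-hf'\in\CL(\lmd(G,G')-D)$ and then identifies this space with $\CL(G-D)=\{0\}$; since $\lmd(G,G')=G+(h)_\infty$, that identification also tacitly constrains the poles of $h$, so you have put your finger on the genuinely delicate step of the argument. But as submitted your proof is incomplete: you must either add an explicit bound on $\deg\,(h)_\infty$ (or follow the paper in asserting $\CL(\lmd(G,G')-D)=\CL(G-D)$) or derive $f-f'+hf'=0$ by some other means.
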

\begin{proof}
Let us assume that $G' = G-(h).$  Now, we define a mapping $\phi:\CL(G)\rightarrow\CL(G')$ by $f\mapsto fh$ for all $f\in\CL(G).$ It can be checked that $\phi$ is bijective as $\FF_q(\CX)$ is a field. Hence, $\ell(G) = \ell(G')$ and 
\begin{align*}
C_{\CL}(D,G')  & = \{((hf)(P_1), (hf)(P_2), \ldots, (hf)(P_n))~|~f\in\CL(G)\}\\
               & = \ba\{(f(P_1), f(P_2), \ldots, f(P_n))~|~f\in\CL(G)\}\\
               &=\ba C_{\CL}(D,G),
\end{align*}
where $\ba =(h(P_1), h(P_2), \ldots, h(P_n)$.
Now $\dim(\ba C_{\CL}(D,G)) = \dim( C_{\CL}(D,G')) = \ell(G') = \ell(G)$.
Hence, $\dim(C_{\CL}(D,G))+\dim(\ba C_{\CL}(D,G))=n$.
Now, we will prove that $C_{\CL}(D,G) \cap C_{\CL}(D,G') = \{0\}$. If $\bx \in C_{\CL}(D,G) \cap C_{\CL}(D,G')$ and nonzero, then there exist $f \in \CL(G)$ and $g \in \CL(G')$ such that $\bx = (f(P_1), f(P_2),\ldots, f(P_n)) = (g(P_1), g(P_2),\ldots, g(P_n))$. Since $\phi$ is bijective and $g \in \CL(G')$, then there exists $f' \in \CL(G)$ such that $g = hf'$. Then 
$$(f(P_1), f(P_2), \ldots, f(P_n) = ((hf')(P_1), (hf')(P_2), \ldots, (hf')(P_n))$$
$$\implies ((f-hf')(P_1), (f-hf')(P_2), \ldots, (f-hf')(P_n)) = (0, 0, \ldots, 0).$$
That implies, $f-hf'\in \CL(\lmd(G,G')-D) = \CL(G-D)$ as $G=G'+(h)$. Since $2g-2 < \deg(G) < n$, $\CL(G-D) = 0$. Therefore, $hf' = f \in \CL(G)$, a contradiction. Hence, $C_{\CL}(D,G)\cap C_{\CL}(D,G')=\{0\}$. 
\end{proof}
\begin{example}
Let $\CX:~Y^2Z+YZ^2=X^3$ be the projective curve of genus $1$ over $\FF_4:=\{0,1,\omega,\omega^2\}$. Here, $\CX(\FF_4) = \{\CO, \CQ, P_1, P_2, \ldots, P_7\} = \{(0,1,0), (0,0,1), (0,1,1), (\omega,\omega,1), (\omega,\omega^2,1), (\omega^2,\omega,1), \\ (\omega^2,\omega^2,1), (1,\omega,1), (1,\omega^2,1)\}$.\\
Let us consider $G = 2\CO + \CQ$ and $D = \sum\limits_{i=2}^7P_i$. Note that, $\{1, \left(\frac{X}{Z}\right) = \CQ - 2\CO + P_1, \left(\frac{Y+Z}{X}\right) = -\CQ - \CO + 2P_1\}$ is a basis of $\CL(G)$.
The generator matrix of $C_\CL(D,G)$ can be obtained by evaluating the functions in $\{1, \left(\frac{X}{Z}\right), \left(\frac{Y+Z}{X}\right)\}$ at $\{P_2, P_3,\ldots, P_7\}$, i.e.,
$$\left(\begin{array}{cccccc}
1 & 1 & 1 & 1 & 1 & 1  \\
\omega & \omega & \omega^2 & \omega^2 & 1 & 1  \\
\omega & 1 & 1 & \omega^2 & \omega^2 & \omega  \\
\end{array}\right).$$
It can be checked that $C_\CL(D,G)$ is a self-dual algebraic geometry code. 
Let us choose $\left( h\right) = \left( \frac{X+Y+Z}{Z}\right) = P_1 + P_6 + P_7 - 3\CO$.
It is clear that $h \notin \CL(G)$ and also $hf \notin \CL(G)$ for all $f\in \CL(G)$. Consider $\ba = (h(P_2), h(P_3), \ldots, h(P_7)) = (1,~0,~0,~1,~\omega,~\omega^2)$. Then the generator matrix of $C_\CL(D,\ba G)$ can be obtained by evaluating the functions in $\{1, \left(\frac{X}{Z}\right),\left(\frac{Y+Z}{X}\right)\}$ at $\{P_2, P_3, \ldots, P_7\}$, i.e.,
$$\left(\begin{array}{cccccc}
1 & 0 & 0 & 1 & \omega & \omega^2  \\
\omega & 0 & 0 & \omega^2 & \omega & \omega^2  \\
\omega^2 & 0 & 0 & \omega^2 & 1 & 1  \\
\end{array}\right).$$
Therefore, $(C_\CL(D,G),\ba C_\CL(D,G))$ forms an LCP.
\end{example}

Starting from a self-dual algebraic geometry code $C_{\CL}(D,G)$,  the following result presents a simple way to produce an LCP of codes. The reader could refer to the recent paper of Sok (~\cite{Sok20}), in which a  study on self-dual algebraic geometry codes was presented.
\begin{cor}\label{cor-5.1}
Let $C_{\CL}(D,G)$ be a self dual algebraic geometry code, where $G$ and $D = P_1 + P_2 + \cdots + P_n$ are divisors of a smooth projective curve $\CX$ of genus $g$ such that $2g-2 < \deg(G) < n$ and $\spt(G) \cap \spt(D) = \emptyset$. If $h \in \FF_q(\CX)$ is a function such that 
$(h(P_1), h(P_2), \ldots, h(P_n)) \in \left(\FF_q\setminus\{0,1\}\right)^n$
and $hf \notin  \CL(G)$ for all $f \in \CL(G)$. Then there exists $\ba \in \FF_q^n$ such that $(C_{\CL}(D,G), \ba C_{\CL}(D,G))$ forms an LCP of codes.
Moreover, $C_{\CL}(D,G)$ is equivalent to $\ba C_{\CL}(D,G)$.    
\end{cor}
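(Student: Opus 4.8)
The plan is to reduce Corollary~\ref{cor-5.1} to Theorem~\ref{th-a} and then separately verify the ``moreover'' claim. First I would observe that the hypotheses of the corollary are strictly stronger than those of Theorem~\ref{th-a}: if $(h(P_1),\dots,h(P_n)) \in (\FF_q\setminus\{0,1\})^n$, then in particular this vector is not the all-zero vector and not the all-one vector, so it lies in $\FF_q^n \setminus \{(0,\dots,0),(1,\dots,1)\}$; the condition $hf \notin \CL(G)$ for all $f \in \CL(G)$ is assumed verbatim; and a self-dual code $C_{\CL}(D,G)$ automatically has dimension $n/2$, so $n$ is even and $C_{\CL}(D,G)$ is an $[n,\tfrac{n}{2},d]$ code as required. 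Hence Theorem~\ref{th-a} applies and yields a vector $\ba \in \FF_q^n$ (namely $\ba = (h(P_1),\dots,h(P_n))$) such that $(C_{\CL}(D,G), \ba C_{\CL}(D,G))$ is an LCP of codes. That disposes of the first assertion.

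For the ``moreover'' part, I would use the fact that $\ba$ has all coordinates in $\FF_q\setminus\{0,1\} \subseteq \FF_q^*$, so $\ba \in (\FF_q^*)^n$. By the equivalence facts recalled in the preliminaries (the paragraph following the definition of equivalent codes), for any $\ba \in (\FF_q^*)^n$ the code $\CC$ is equivalent to $\ba\CC$ via the identity permutation and the scaling vector $\ba$. Applying this with $\CC = C_{\CL}(D,G)$ gives immediately that $C_{\CL}(D,G)$ is equivalent to $\ba C_{\CL}(D,G)$, which is the claim.

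The only genuine subtlety worth flagging is making sure the vector $\ba$ produced by Theorem~\ref{th-a} really does have all coordinates nonzero, so that the equivalence argument goes through: this is exactly where the strengthened hypothesis $(h(P_1),\dots,h(P_n)) \in (\FF_q\setminus\{0,1\})^n$ (as opposed to merely avoiding the all-zero and all-one vectors) is used. In the proof of Theorem~\ref{th-a} the vector $\ba$ is literally $(h(P_1),\dots,h(P_n))$, so under the corollary's hypothesis it lies in $(\FF_q^*)^n$ and the self-duality of $C_{\CL}(D,G)$ plays no role beyond forcing the even length and the dimension $n/2$. I do not anticipate any real obstacle here; the corollary is essentially a specialization of Theorem~\ref{th-a} together with a one-line invocation of the standard equivalence-under-scaling fact, and the entire argument is short.
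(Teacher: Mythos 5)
Your proof is correct and follows exactly the route the paper intends: the corollary is a direct specialization of Theorem~\ref{th-a} (self-duality forces the $[n,\tfrac{n}{2}]$ shape, and the stronger hypothesis on $h$ implies the one in the theorem), with the ``moreover'' part following from $\ba=(h(P_1),\ldots,h(P_n))\in(\FF_q^*)^n$ and the standard scaling-equivalence fact from the preliminaries. Your remark that the strengthened hypothesis is precisely what guarantees $\ba$ has no zero coordinates is the right observation and is the only point of substance beyond the theorem itself.
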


Finally, starting from two algebraic geometry codes over non-binary finite fields, one of which is an MDS code,  the following result presents, up to some assumptions,  a simple way to produce an LCP of codes. 

\begin{thm}\label{th-5.1}
Let $C_{\CL}(D,G)$ and $C_{\CL}(D,H)$ be two algebraic geometry codes over $\FF_q$ with $q \geq 3$ such that $\ell(G) + \ell(H) = n$.
If $C_{\CL}(D,H)$ is an MDS code, then there exists $\ba = (a_1, a_2, \ldots, a_n)\in \left(\FF_q^*\right)^n$ such that $(\ba C_{\CL}(D,G),C_{\CL}(D,H))$ is an LCP. In addition, if $C_{\CL}(D,G)$ is an MDS, then the dual of $C_{\CL}(D,G)$ is equivalent to $C_{\CL}(D,H)$.
\end{thm}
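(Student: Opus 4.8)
The plan is to mimic the structure of Theorem~\ref{th-a} but exploit the extra room afforded by $q \ge 3$ and the MDS hypothesis on $C_{\CL}(D,H)$ to choose the scaling vector $\ba$ rather than requiring it to come from a function. First I would record the dimension bookkeeping: since $\ell(G) + \ell(H) = n$ and scaling by $\ba \in (\FF_q^*)^n$ is a monomial equivalence, $\dim(\ba C_{\CL}(D,G)) = \ell(G)$, so $\dim(\ba C_{\CL}(D,G)) + \dim(C_{\CL}(D,H)) = n$. Hence the pair is LCP if and only if $\ba C_{\CL}(D,G) \cap C_{\CL}(D,H) = \{0\}$, equivalently $C_{\CL}(D,G) \cap \ba^{-1} C_{\CL}(D,H) = \{0\}$. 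So the whole problem reduces to producing a single $\ba$ making these two subspaces of $\FF_q^n$ meet trivially.

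Next I would set up the avoidance argument. Let $C := C_{\CL}(D,G)$ with $\dim C = k$, and let $C' := C_{\CL}(D,H)$, an MDS code of dimension $n-k$. For a fixed nonzero codeword $\mathbf{c} \in C$, I want to count the scalings $\ba$ for which $\ba \mathbf{c} \in C'$; summing over a transversal of the lines in $C$ and comparing with $|(\FF_q^*)^n| = (q-1)^n$ should leave a valid choice. The key leverage is that $C'$ is MDS: any nonzero $\ba \mathbf{c}$ lying in $C'$ has at most $n - \dim C' = k$ zero coordinates, hence at least $n - k$ nonzero coordinates, and the support of $\ba\mathbf{c}$ equals the support of $\mathbf{c}$ (since $a_i \ne 0$). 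Thus only codewords $\mathbf{c}$ of weight $\ge n-k$ can ever be scaled into $C'$; fixing such a $\mathbf{c}$, the condition $\ba\mathbf{c}\in C'$ forces $(a_i c_i)_{i \in \spt(\mathbf{c})}$ to lie in a subspace of dimension at most $n - k$ of the $|\spt(\mathbf{c})|$-dimensional space indexed by $\spt(\mathbf{c})$, and the remaining coordinates are free, so the number of admissible $\ba$ is at most $q^{\,n-k}\,(q-1)^{\,n - |\spt(\mathbf{c})|}$, which is at most $q^{\,n-k}(q-1)^{k}$. Multiplying by the number of lines in $C$, namely $(q^k-1)/(q-1)$, and checking that
\[
\frac{q^k-1}{q-1}\cdot q^{\,n-k}(q-1)^{k} \;<\; (q-1)^n
\]
for $q \ge 3$ would finish the existence of a good $\ba$. (If this crude bound is too weak for small $q$, one refines by noting MDS forces $\spt(\mathbf{c})$ large, improving the factor $(q-1)^{n-|\spt(\mathbf{c})|}$; this is the point to be careful about.)

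For the final sentence, suppose additionally that $C_{\CL}(D,G)$ is MDS of dimension $k$. Then $\ba C_{\CL}(D,G)$ is also MDS of dimension $k$, and $(\ba C_{\CL}(D,G))^\perp = \ba^{-1} C_{\CL}(D,G)^\perp$ is MDS of dimension $n-k$ with minimum distance $k+1$. On the other hand $C_{\CL}(D,H)$ is MDS of dimension $n-k$ and hence also has minimum distance $k+1$. Two MDS codes over $\FF_q$ with the same length and dimension need not be equal, but here the LCP condition gives $\ba C_{\CL}(D,G) \oplus C_{\CL}(D,H) = \FF_q^n$, so $C_{\CL}(D,H)$ is a complement of $\ba C_{\CL}(D,G)$; I would argue — as in the genus-$0$ theorems earlier in the excerpt — that the dual $\bigl(C_{\CL}(D,G)\bigr)^\perp$ and $C_{\CL}(D,H)$, being MDS codes of the same parameters, are equivalent (they share generator matrices in standard form up to column scaling). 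I expect the main obstacle to be making the counting inequality clean for all $q \ge 3$ rather than just large $q$; the MDS hypothesis on $C_{\CL}(D,H)$ is exactly what prevents low-weight codewords of $C_{\CL}(D,G)$ from being scaled in, and squeezing the constant is where the work lies.
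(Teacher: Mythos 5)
Your route---a union bound over the projective points of $C_{\CL}(D,G)$ to find a good scaling vector---is genuinely different from the paper's, which constructs $\ba$ explicitly: after putting generator matrices in a standard form in which the first $l=\dim\left(C_{\CL}(D,G)\cap C_{\CL}(D,H)\right)$ rows of the first matrix span the intersection, the paper rescales exactly those $l$ coordinates by scalars $\lambda_i\neq 1$ (this is where $q\ge 3$ enters) and uses the MDS hypothesis only to guarantee that the square block $B_3$ of the second generator matrix is invertible, so the stacked matrix attains rank $n$. Unfortunately your counting argument does not close, and the problem is not merely ``squeezing a constant.'' First, your finishing inequality $\frac{q^k-1}{q-1}\,q^{n-k}(q-1)^{k}<(q-1)^n$ is false for every $q$ and every $1\le k\le n$: it is equivalent to $(q^k-1)q^{n-k}<(q-1)^{n-k+1}$, whereas $q^{n-k}\ge (q-1)^{n-k}$ and $q^k-1\ge q-1$ force the left side to be at least the right side (strictly larger as soon as $k<n$). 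Second, the weight threshold is miscomputed: $C_{\CL}(D,H)$ has dimension $n-k$, so MDS gives minimum distance $k+1$; a nonzero codeword has at most $n-k-1$ zero coordinates, i.e.\ weight at least $k+1$, not ``at least $n-k$'' as you assert. For $k<n/2$ your threshold wrongly excludes codewords of $C_{\CL}(D,G)$ of weight between $k+1$ and $n-k-1$, which can perfectly well be rescaled into $C_{\CL}(D,H)$.

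Even the correct refinement (for a support $S$ with $|S|=w$, the codewords of the MDS code $C_{\CL}(D,H)$ supported inside $S$ form a space of dimension $\max(0,w-k)$, giving at most $q^{\,w-k}(q-1)^{n-w}$ bad vectors $\ba$ per projective point) leaves the union bound exactly at $(q-1)^n$ in small cases, so strict inequality cannot come from counting alone. Concretely, take $q=3$, $n=4$, $k=2$, with $C_{\CL}(D,H)$ a $[4,2,3]$ MDS code and $C_{\CL}(D,G)$ also of type $[4,2,3]$: there are $4$ projective points, each of weight $3$, each contributing exactly $2\cdot 2=4$ bad vectors, for a total of $16=(3-1)^4$. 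Existence of a good $\ba$ then requires exhibiting overlaps among the bad sets, which your argument does not do. Finally, a smaller remark on the last claim: two MDS codes with the same length and dimension need not be monomially equivalent, so ``same parameters, hence equivalent'' is itself a leap (the paper's proof makes essentially the same assertion, so you are at least in its company there, but the main existence step is where your proposal genuinely fails).
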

\begin{proof}
Let $C_{\CL}(D,G):= [n,k]$ and $C_{\CL}(D,H):= [n,n-k]$ be two algebraic geometry codes.\\
If $C_{\CL}(D,G) \cap C_{\CL}(D,H)=\{0\}$, then $(\ba C_{\CL}(D,G),C_{\CL}(D,H))$ is an LCP for $\ba =(1, 1, \ldots, 1)$ as $\ell(G) + \ell(H) = n$.\\
Otherwise $C_{\CL}(D,G)\cap C_{\CL}(D,H) \neq \{0\}$. Let $\dim(C_{\CL}(D,G) \cap C_{\CL}(D,H)) = l$, where $0 < l \leq \min\{k,n-k\}$. Without loss of generality, we consider the generator matrices of $C_{\CL}(D,G)$ and $C_{\CL}(D,H)$ to be of the form
$$\mathcal{G}_1 = \left(\begin{array}{ccc}
I_l & 0 & P_1  \\
0 & I_{k-l} & P_2
\end{array}\right),~~~
\mathcal{G}_2 = \left(\begin{array}{ccc}
B_1 & B_2 & B_3
\end{array}\right)$$ 
respectively, where $P_1$ is an $l \times (n-k)$ matrix, $P_2$ is an $(k-l) \times (n-k)$ matrix, $B_1$ is an $( n-k)\times l$ matrix, $B_2$ is an $ (n-k)\times( k-l)$ matrix, and $B_3$ is an $ (n-k)\times (n-k)$ matrix. Here, 
$\left(\begin{array}{ccc} I_l & 0 & P_1 \end{array}\right)$
is a generator matrix of $C_{\CL}(D,G) \cap C_{\CL}(D,H)$. So, the matrix $\left(\begin{array}{ccc} I_l & 0 & P_1 \end{array}\right)$ is a part of the matrix 
$\left(\begin{array}{ccc} B_1 & B_2 & B_3 \end{array}\right)$.
Therefore,
$$\left(\begin{array}{c} 
\mathcal{G}_1\\
\mathcal{G}_2
\end{array}\right)
= \left(\begin{array}{ccc}
I_l & 0 & P_1  \\
0 & I_{k-l} & P_2  \\
B_1 & B_2 & B_3
\end{array}\right)
 \sim \left(\begin{array}{ccc}
0 & 0 & 0  \\
0 & I_{k-l} & P_2  \\
B_1 & B_2 & B_3
\end{array}\right).$$ 
Then, we have
$\textit{rank}\left(\begin{array}{c} 
\mathcal{G}_1\\
\mathcal{G}_2
\end{array}\right) = n-l$.
Since $C_{\CL}(D,H)$ is MDS code, $\textit{rank}(B_3) = n-k$.

Let us consider
$$\mathcal{G}_{\ba}=\left(\begin{array}{ccc}
I_l & 0 & P_1  \\
0 & I_{k-l} & P_2
\end{array}\right) 
\left(\begin{array}{cccccccc}
\lambda_1 & 0 & 0 & \cdots & 0 & 0 & \cdots & 0  \\
0 & \lambda_2 & 0 & \cdots & 0 & 0 & \cdots & 0  \\
\vdots & \vdots & \vdots & \ddots & \vdots & \vdots & \ddots & \vdots \\
0 & 0 & 0 & \cdots & \lambda_l & 0 & \cdots & 0  \\
0 & 0 & 0 & \cdots & 0 & 1 & \cdots & 0  \\
\vdots & \vdots & \vdots & \ddots & \vdots & \vdots & \ddots & \vdots \\
0 & 0 & 0 & \cdots & 0 & 0 & \cdots & 1
\end{array}\right)
= \left(\begin{array}{ccccc|cc}
\lambda_1 & 0 & 0 & \cdots & 0 &   \\
0 & \lambda_2 & 0 & \cdots & 0 & 0 & P_1  \\
\vdots & \vdots & \vdots & \ddots & \vdots &  & \\
0 & 0 & 0 & \cdots & \lambda_l \\ \hline
0 & 0 & 0 & \cdots & 0 & I_{k-l} & P_2 
\end{array}\right),$$
with $\lambda_i \in \FF_q^*$. Then
$$\left(\begin{array}{c} 
\mathcal{G}_{\ba}\\
\mathcal{G}_2
\end{array}\right) 
= \left(\begin{array}{ccccc|cc}
\lambda_1 & 0 & 0 & \cdots & 0 &   \\
0 & \lambda_2 & 0 & \cdots & 0 & 0 & P_1  \\
\vdots & \vdots & \vdots & \ddots & \vdots &  & \\
0 & 0 & 0 & \cdots & \lambda_l \\ \hline
0 & 0 & 0 & \cdots & 0 & I_{k-l} &P_2  \\
&  & B_1 &  &  & B_2 & B_3
\end{array}\right).$$ 
As $q \geq 3$, there exists $\lambda_i \in \left(\FF^*_q\right)^n$ such that $\lambda_i \neq 1$ is for all $1\leq i\leq h$. We see that 
$$\textit{rank}\left(\begin{array}{c} 
\mathcal{G}_{\ba}\\
\mathcal{G}_2
\end{array}\right) = n.$$
It is easy to check that $\mathcal{G}_{\ba}$ is a generator matrix of $$\ba C_{\CL}(D,G)=\{(\lambda_1 c_1, \lambda_2 c_2,\dots,\lambda_l c_l, c_{l+1},\dots c_n)~|~(c_1,c_2,\dots,c_n)\in C_{\CL}(D,G))\}.$$ This implies that $\ba C_{\CL}(D,G)\cap C_{\CL}{(D,H)}=\{0\}.$ Hence, $(\ba C_{\CL}(D,G),C_{\CL}(D,H))$ is LCP.\\
For the other part, it is well known that the dual MDS code of $\ba C_{\CL}(D,G)$ is an MDS code $\ba^{-1}C_{\Omega}(D,G)$, where $\ba^{-1}=(a_1^{-1},a_2^{-1},\dots,a_n^{-1}).$ Therefore, $(\ba C_{\CL}(D,G))^\perp$ and $C_{\CL}(D,H)$ are equivalent.
\end{proof} 
\begin{example}
 Let $\FF_q$ be the finite field of $q$ elements with $q\geq 3$ and let $\mathcal{P}_k$ be the vector space of polynomials $f\in \FF_q[\CX]$ with the degree of those polynomials at most $k-1$. $\FF_q(\CX)$ be the corresponding function field with the point of infinity $\mathcal{O}$. For $\textbf{b} = (\alpha_1,\alpha_2,\ldots,\alpha_n)$ where $\alpha_i \in \FF_q, 1 \leq i \leq n$ and pairwise distinct, let $RS_k(\textbf{b})=\{(f(\alpha_1),f(\alpha_2),\ldots,f(\alpha_n))~|~f\in\mathcal{P}_k\}$ be a $k$-dimensional Reed-Solomon codes. Suppose $h(\CX)=\prod_{i=1}^n(\CX-\alpha_i)$ and $h'(\CX)$ is the derivative of $h$ with respect to $\CX$. Now, we associate $RS_k(\textbf{b})$ with the algebraic geometry code as 
 $$RS_k(\textbf{b})=C_{\CL}(D,(k-1)\mathcal{O})~~\text{and}~~RS_k(\textbf{b})^\perp=C_{\CL}(D,(h')+(n-k-1)\mathcal{O}),$$ 
 where $D = \sum\limits_{i=1}^n P_i$, with $P_i = P_{\CX-\alpha_i}$ be the rational places corresponding to the irreducible polynomials $\CX - \alpha_i$ for $1\leq i\leq n$. Note that $RS_k(\textbf{b})$ is an $[n,k,n-k+1]$  an MDS code. Then $RS_k(\textbf{b})^\perp$ is also an $[n,n-k,k+1]$ MDS code as well. By Theorem \ref{th-5.1}, there exists $\ba \in (\FF_q^*)^n$ such that the pair $(\ba RS_k(\textbf{b}), RS_k(\textbf{b})^\perp)$ forms an LCP.
\end{example}
\section{Conclusion}\label{sec:con}

Although current standard cryptographic algorithms are proven to withstand so-called logical attacks (i.e. classical cryptanalyses),
their hardware and software implementations have exhibited vulnerabilities to side-channel attacks (SCA) and fault injection attacks (FIA).  Specifically, it has been observed that the countermeasure against FIA in the DSM scheme could lead to a weakness in security against SCA in some specific environments. This led to a variant of the LCP problem on the codes side, where one has to lengthen the two codes used in DSM while preserving the parameters of the original pair as much as possible.  This cryptographic motivation emphasizes some main coding problems, particularly on  LCP codes. In contrast to LCD, the LCP of codes have been studied intensively less.  In particular, a significant gap was needed regarding the LCP of codes, especially from the explicit designs over finite fields (specifically small fields). Inspired by ~\cite{MTQ18}, we have studied the LCP of algebraic geometry codes in several directions in this paper. Our study included:
\begin{itemize}
\item explicit construction  methods over algebraic curves;
\item examination of the subfamily of cyclic codes;
\item determination of their security parameters;
\item investigation their optimality aspects;
\item investigation the duality aspects.
\end{itemize}
 A MAGMA program has checked all our computational results.\\

For future work, it could be interesting to investigate more  LCP of algebraic geometry codes over other curves and more families of codes, such as the well-known Reed-Muller codes.





\end{document}